\begin{document}

\title{Control Lyapunov-Barrier Function Based Model Predictive Control for Stochastic Nonlinear Affine Systems 
\thanks{This work was supported by National Natural Science Foundation of China under grant 62073015.}}

\author[1]{Weijiang Zheng}

\author[2]{Bing Zhu}

\address[1]{The Seventh Research Division,\\
Beihang University, Beijing 100191, P.R.China.\\(e-mail: zwj123@buaa.edu.cn; zhubing@buaa.edu.cn).}

\abstract[Abstract]{A stochastic model predictive control (MPC) framework is presented in this paper for nonlinear affine systems with stability and feasibility guarantee. We first introduce the concept of stochastic control Lyapunov-barrier function (CLBF) and provide a method to construct CLBF by combining an unconstrained control Lyapunov function (CLF) and control barrier functions. 
The unconstrained CLF is obtained from its corresponding semi-linear system through dynamic feedback linearization. Based on the constructed CLBF, we utilize sampled-data MPC framework to deal with states and inputs constraints, and to analyze stability of closed-loop systems. Moreover, event-triggering mechanisms are integrated into MPC framework to improve performance during sampling intervals.
The proposed CLBF based stochastic MPC is validated via an obstacle avoidance example.}

\keywords{Stochastic model predictive control, control Lyapunov-barrier function, dynamic feedback linearization, sampled-data systems,  event-triggering mechanisms}

\maketitle

\section{Introduction}\label{sec1}

Control Lyapunov function (CLF) plays a significant role in nonlinear control design for stabilization problems. When attention turns to safety critical system, the notion of control barrier function (CBF), which is inspired by CLF, provides safety guarantee while ridding of complicated computation of reachable sets\cite{Ames2019}. In general, there are two main streams to unify CLF and CBF. One adds slack variables in CLF constraint if it conflicts with CBF constraints\cite{Ames2016}, i.e., a tradeoff between stability and safety at the expense of performance, while another constructs control Lyapunov-barrier function (CLBF) \cite{Romdlony2016}. In the construction of CLBF, CLF and CBFs are designed independently and then merged to form CLBF. Therefore, few changes will be made on the design if unsafe regions are altered. For nonlinear affine systems, the linearity of dynamics towards control inputs results in linear constraints of CLF and CBF, which can be efficiently solved in optimization-based control while improving performance, e.g., quadratic program (QP) and MPC. In the work of Wu et al\cite{Wu2019}, a CLBF based MPC strategy is presented to deal with limited control inputs. 

For stochastic system, barrier certificate is introduced in the work of Prajna et al\cite{Prajna2007} to bound the risk of entering unsafe regions with its nonnegative supermartingale property. Santoyo et al\cite{Santoyo2021} give a tighter bound and utilizes sum-of-squares optimization to search for polynomial CBF and control policy. A novel CBF construction that guarantees safety with probability 1 is presented in the work of Clark\cite{Clark2021}. For system with high relative degree, Sarkar et al\cite{Sarkar2020} first introduce definitions of high relative degree stochastic CBF, and Wang et al\cite{Wang2021} extend it to bounded control situation. In the work of Yaghoubi et al\cite{Yaghoubi2020}, CBF is composed with a control Lyapunov like function to achieve goal set reachability rather than stability for nonholonomic systems. However, few attempts are made to construct either stochastic CLBF or MPC setting, which is the motivation of this paper.

Model Predictive Control (MPC) is a powerful technique that addresses finite horizon optimal control problems by employing a receding horizon approach. This method effectively handles constraints, making it suitable for various applications. However, the performance of the closed-loop system can significantly deteriorate due to external disturbances. 
To mitigate this issue, stochastic Model Predictive Control (MPC) leverages the random characteristics of disturbances. 
Recent researches has explored stochastic nonlinear MPC, primarily focusing on discrete-time systems. Several notable approaches have emerged, including tube-based methods designed to handle bounded disturbances \cite{bonzanini2019tube, schluter2020constraint}. Additionally, an RNN-based approach has been proposed to address joint chance constraints \cite{yang2022recurrent}. Other advancements involve output-feedback techniques \cite{messerer2022dual}, stability analysis \cite{santos2019constraint, mcallister2022nonlinear}, all specifically tailored for discrete-time systems.
However, in reality, considerable amount of systems are continuous-time systems.
Comparatively, stochastic nonlinear MPC for continuous-time systems has received relatively less focus \cite{mesbah2016stochastic}.

The main idea of this paper is to develop a stochastic MPC method for stochastic nonlinear affine systems that can handle states and inputs constraints and ensure feasibility and stability.
In continuous-time MPC, control actions are exerted in a sample-and-hold manner, hence we consider stability
analysis for sampled-data system. 
An auxiliary Lyapunov controller is designed to 
ensure stability for stochastic nonlinear sampled-data systems and to approximately estimate  feasibility region.
Slack variables are added to recover feasibility. 
The proposed CLBF based stochastic MPC can provide safety guarantee while deal with stabilization problems. 
Moreover, event-triggering mechanisms can be integrated into MPC framework to improve control performance.
An obstacle avoidance example is presented in simulation section.

Main contributions of this paper are summarized as follows:
\begin{itemize}
    \item A new concept of stochastic CLBF is proposed to ensure safety and stability in stochastic sense.
    \item A framework is provided to construct
    stochastic CLBF through dynamic feedback linearization.
    \item A CLBF based stochastic MPC is designed with feasibility and stability guarantee.
\end{itemize}

The remainder of the paper is summarized as follows. We propose a notion of stochastic control Lyapunov-barrier function in Section \ref{sec2}. In Section \ref{sec3}, an unconstrained CLF, which derived from the corresponding linearized system via dynamic feedback linearization, is composed with CBFs to produce CLBF. The proposed CLBF is applied in an MPC setting in Section \ref{sec4}. A wheeled mobile robot example is presented in Section \ref{sec5} to illustrate design procedure and validation.

\section{Stochastic Control Lyapunov-Barrier Function}\label{sec2}

Consider the following stochastic nonlinear affine system
\begin{eqnarray}\label{eq1}
d\mathbf{x} = \left( f\left( \mathbf{x} \right) + g\left( \mathbf{x} \right)\mathbf{u} \right)dt + \sigma\left( \mathbf{x} \right)dW
\end{eqnarray}
where $\mathbf{x} \in X \subseteq \mathbb{R}^{n}$ is the state, $\mathbf{u} \in U \subseteq \mathbb{R}^{m}$ is the constrained control input and $W$ is an n-dimensional Brownian motion. The functions $f$, $g$ and $\sigma$ are locally Lipschitz continuous with proper dimensions. 
We aim to design the control strategy that stabilizes the closed-loop system while guaranteeing safety in a probabilistic sense, i.e., keeping the states away from the unsafe region with probability 1. Such unsafe region $D$ can be expressed by several disjoint simply-connected open sets $D_i$ that defined by locally Lipschitz functions $h_{i}\left( \mathbf{x} \right)$.

\begin{eqnarray}
D = {\bigcup\limits_{i = 1}^{n_{B}}D_{i}}
\end{eqnarray}

\begin{eqnarray}
D_{i} = \left\{ \mathbf{x} \middle| {h_{i}\left( \mathbf{x} \right) < 0} \right\}
\end{eqnarray}

We first introduce the definitions of control Lyapunov function and control barrier function of stochastic system (\ref{eq1}).
\begin{definition}\label{def1}
\cite{Haddad2019}
A twice differentiable positive-definite function $\left. V:X\rightarrow\mathbb{R} \right.$ is called a stochastic control Lyapunov function of (\ref{eq1}) if it satisfies
\begin{eqnarray}\label{eq4}
\begin{matrix}
{\inf} \\
{u \in U} \\
\end{matrix}\mathcal{L}V\left( \mathbf{x} \right) < 0,\forall\mathbf{x} \in X,\mathbf{x} \neq 0
\end{eqnarray}
where the infinitesimal generator $\mathcal{L}V\left( \mathbf{x} \right)$ is given by
\begin{eqnarray}
\mathcal{L}V\left( \mathbf{x} \right) = L_{f}V\left( \mathbf{x} \right) + L_{g}V\left( \mathbf{x} \right)\mathbf{u} + \frac{1}{2}\mathrm{tr}\left( {\sigma^{T}\left( \mathbf{x} \right)\frac{\partial^{2}V}{\partial\mathbf{x}^{2}}\sigma\left( \mathbf{x} \right)} \right)
\end{eqnarray}
\end{definition}
A stochastic nonlinear system (\ref{eq1}) is stochastically asymptotically stabilizable if and only if there exists a stochastic control Lyapunov function satisfying (\ref{eq4}). If control action $\mathbf{u}$ satisfies $\mathcal{L}V\left( \mathbf{x}_{t} \right) < 0$ all the time, the zero solution $\mathbf{x}_{t} \equiv 0$ is asymptotically stable in probability.

\begin{definition}
A twice differentiable function $B_i:X\rightarrow\mathbb{R}$ is called a stochastic (zero) control barrier function if it satisfies
\begin{eqnarray}
B_{i}\left( \mathbf{x} \right) > 0,\forall\mathbf{x} \in D_{i}
\end{eqnarray}
\begin{eqnarray}
\begin{matrix}
{\inf} \\
{u \in U} \\
\end{matrix}\mathcal{L}B_{i}\left( \mathbf{x} \right) \leq - B_{i}\left( \mathbf{x} \right),\forall\mathbf{x} \in X\backslash D_{i}
\end{eqnarray}
\begin{eqnarray}
\left\{ \mathbf{x} \in X \middle| B_{i}\left( \mathbf{x} \right) \leq 0 \right\} \neq \varnothing
\end{eqnarray}
\end{definition}
Similarly, according to Clark's Theorem 3\cite{Clark2021}, if control action $\mathbf{u}$ satisfies $\mathcal{L}B_{i}\left( \mathbf{x}_{t} \right) \leq - B_{i}\left( \mathbf{x}_{t} \right)$ all the time, then $P\left\{ \mathbf{x} \in X/D_{i} \middle| \mathbf{x}_{0} \in X/D_{i} \right\} = 1$.
We now give the notion of stochastic control Lyapunov-barrier function, which is an extension of deterministic systems\cite{Romdlony2016}\cite{Wu2019}.

\begin{definition}\label{def3}
Given an unsafe region $D\subseteq X$, if there exists a twice differentiable function $W_c:X\rightarrow R$, which has a minimum at the origin, satisfying 
\begin{eqnarray}\label{eq9}
W_{c}\left( \mathbf{x} \right) > 0,\forall\mathbf{x} \in D
\end{eqnarray}
\begin{eqnarray}\label{eq10}
\begin{matrix}
{\inf} \\
{u \in U} \\
\end{matrix}\mathcal{L}W_{c}\left( \mathbf{x} \right) < 0,\forall\mathbf{x} \in X\backslash\left( {D \cup \left\{ 0 \right\}} \right)
\end{eqnarray}
\begin{eqnarray}\label{eq11}
\left\{ {\left. {\mathbf{x} \in X} \right|W_{c}\left( \mathbf{x} \right) \leq 0} \right\} \neq \varnothing
\end{eqnarray}
then the function $W_{c}\left( \mathbf{x} \right)$ is called a stochastic control Lyapunov-barrier function for (\ref{eq1}).
\end{definition}
With slight modifications on the proof of Clark's Theorem 3\cite{Clark2021}, the following proposition provides safety guarantee for closed-loop dynamics if there exists the function $W_{c}\left( \mathbf{x} \right)$.
\begin{proposition}\label{pro1}
Given an unsafe region $D\subseteq X$, if there exists a twice differentiable function $W_c:X\rightarrow R$, which has a minimum at the origin, satisfying
\begin{eqnarray}
W_{c}\left( \mathbf{x} \right) > 0,\forall\mathbf{x} \in D
\end{eqnarray}
\begin{eqnarray}
\begin{matrix}
{\inf} \\
{u \in U} \\
\end{matrix}\mathcal{L}W_{c}\left( \mathbf{x} \right) \leq - W_{c}\left( \mathbf{x} \right),\forall\mathbf{x} \in X\backslash\left( {D_{relaxed} \cup \left\{ 0 \right\}} \right)
\end{eqnarray}
\begin{eqnarray}
\left\{ {\left. {\mathbf{x} \in X} \right|W_{c}\left( \mathbf{x} \right) \leq 0} \right\} \neq \varnothing
\end{eqnarray}
\end{proposition}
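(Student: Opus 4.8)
The conclusion is cut off in the displayed statement; from the context (``provides safety guarantee'' and the explicit appeal to Clark's Theorem~3) the intended assertion is the probability-one safety property: if the control $\mathbf{u}$ is applied so that $\mathcal{L}W_{c}(\mathbf{x}_{t})\le-W_{c}(\mathbf{x}_{t})$ holds along the closed-loop solution whenever $\mathbf{x}_{t}\in X\backslash(D_{relaxed}\cup\{0\})$ --- such a selection exists because $\mathcal{L}W_{c}$ is affine in $\mathbf{u}$ and the second inequality is precisely an admissibility condition, so a measurable minimizer can be taken --- then every solution issued from $\{\mathbf{x}\in X\mid W_{c}(\mathbf{x})\le0\}$ (more generally, as in Clark's theorem, from $X\backslash D_{relaxed}$) satisfies $P\{\mathbf{x}_{t}\notin D\text{ for all }t\ge0\}=1$. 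The plan is to transcribe Clark's proof of Theorem~3 with $W_{c}$ in the role of the barrier and $D_{relaxed}$ in the role of the unsafe set, adapting the two features peculiar to the CLBF setting: $W_{c}$ is only assumed to possess a minimum at the origin, and the infinitesimal decrease is imposed on the relaxed complement. First I would perform a geometric reduction. Since $W_{c}>0$ on $D$ and, by the construction carried out in Section~\ref{sec3}, $W_{c}>0$ on the enlarged region $D_{relaxed}$, fix $\delta$ with $0<\delta\le\inf_{\mathbf{x}\in D_{relaxed}}W_{c}(\mathbf{x})$ (positive because the unsafe region is bounded). Then the sublevel set $\Omega:=\{\mathbf{x}\mid W_{c}(\mathbf{x})<\delta\}$ contains the initial set and is disjoint from $D_{relaxed}\supseteq D$, so it suffices to show that the stopping time $\tau:=\inf\{t\ge0\mid W_{c}(\mathbf{x}_{t})\ge\delta\}$ is almost surely infinite. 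For $t<\tau$ the state lies in $\Omega\subseteq X\backslash D_{relaxed}$, where the decrease inequality is in force; the only excluded point, the origin, where $W_{c}$ attains its minimum value $\le0<\delta$, sits strictly inside $\Omega$ and far (in $W_{c}$-value) from the escape level $\delta$, so it is inert for the argument --- just as the origin is handled separately in the deterministic CLBF proofs.

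Next I would run the exponentially weighted supermartingale argument. Put $M_{t}:=e^{t\wedge\tau}\,W_{c}(\mathbf{x}_{t\wedge\tau})$. By It\^o's formula and the expression for $\mathcal{L}W_{c}$, for $t<\tau$,
\[
dM_{t}=e^{t}\bigl(W_{c}(\mathbf{x}_{t})+\mathcal{L}W_{c}(\mathbf{x}_{t})\bigr)\,dt+e^{t}\,\frac{\partial W_{c}}{\partial\mathbf{x}}\,\sigma(\mathbf{x}_{t})\,dW ,
\]
and the hypothesis $\mathcal{L}W_{c}\le-W_{c}$ makes the bounded-variation part nonpositive, so $M_{t}$ is a local supermartingale. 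To upgrade this to a genuine supermartingale and legitimately apply optional stopping I would localize further --- stopping in addition at the first exit of $\mathbf{x}_{t}$ from a large ball and at a deterministic horizon, then letting these auxiliary times run to infinity and invoking Fatou's lemma --- using crucially that $W_{c}$, having a minimum at the origin, is bounded below on $X$, which keeps the negative part of $M$ integrable. This yields $\mathbb{E}[M_{t\wedge\tau}]\le M_{0}=W_{c}(\mathbf{x}_{0})\le0$ for every $t\ge0$.

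Finally, on $\{\tau\le t\}$ continuity of $s\mapsto W_{c}(\mathbf{x}_{s})$ forces $W_{c}(\mathbf{x}_{\tau})=\delta$, so $M_{t\wedge\tau}\ge\delta$ there; inserting this into $\mathbb{E}[M_{t\wedge\tau}]\le0$ and controlling the contribution of $\{\tau>t\}$ by the same truncation (and the lower bound on $W_{c}$) forces, exactly as in Clark's argument, $P(\tau\le t)=0$ for all $t$, hence $P(\tau<\infty)=0$. Since $\{t<\tau\}$ implies $\mathbf{x}_{t}\in\Omega\subseteq X\backslash D_{relaxed}\subseteq X\backslash D$, we conclude $P\{\mathbf{x}_{t}\notin D,\ \forall t\ge0\}=1$, i.e.\ the safety guarantee. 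The It\^o computation, the geometric reduction, and the measurable selection of $\mathbf{u}$ are routine; I expect the genuine obstacle to be this last step: because the weight $e^{t}$ is unbounded and $X$ is not assumed compact, optional stopping is not available off the shelf, and closing the balance ``$\mathbb{E}[M_{t\wedge\tau}]\le0$ versus $M_{t\wedge\tau}\ge\delta$ on $\{\tau\le t\}$'' requires the careful truncation in which the boundedness of $W_{c}$ from below and the shielding sublevel set $\Omega$ (strictly below the level attained on $D_{relaxed}$) are essential --- this is exactly where the ``slight modifications'' to Clark's proof reside.
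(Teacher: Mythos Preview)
Your exponential-supermartingale route is genuinely different from the paper's proof and, as written, has a structural gap that is not the localization issue you anticipate. The paper does not weight by $e^{t}$; following Clark, it fixes a threshold $\theta=\max\bigl(W_{c}(\mathbf{x}_{0}),-\delta\epsilon/(2t)\bigr)\le 0$, introduces the successive up- and down-crossing times $\zeta_{i},\eta_{i}$ of $W_{c}(\mathbf{x}_{t})$ across $\theta$, and builds a \emph{submartingale} majorant $U_{t}$ by integrating the drift bound $-\theta\ge 0$ only on the up-crossing intervals. An induction gives $W_{c}(\mathbf{x}_{t})\le U_{t}$ and $\theta\le U_{t}$, and Doob's submartingale maximal inequality then yields $P\{\sup_{t'<t}W_{c}(\mathbf{x}_{t'})>\epsilon\}\le -\theta t/\epsilon<\delta$. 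The whole point is that the drift bound $-\theta$ is positive but can be chosen arbitrarily small, and that smallness is what drives the probability estimate.

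In your scheme, $M_{t}=e^{t\wedge\tau}W_{c}(\mathbf{x}_{t\wedge\tau})$ is indeed a local supermartingale with $E[M_{t\wedge\tau}]\le W_{c}(\mathbf{x}_{0})\le 0$, and on $\{\tau\le t\}$ one has $M_{t\wedge\tau}=e^{\tau}\delta>0$. But on $\{\tau>t\}$ the state lies in $\{W_{c}<\delta\}$, where $W_{c}$ is typically strictly negative (the safe set is precisely $\{W_{c}\le 0\}$), so $M_{t\wedge\tau}=e^{t}W_{c}(\mathbf{x}_{t})$ can be as negative as $-e^{t}K$ with $K=-\min W_{c}$. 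Splitting the expectation gives only $\delta\,P(\tau\le t)\le e^{t}K\,P(\tau>t)$, which is vacuous as $t$ grows. The obstruction is sign, not integrability: the exponential trick works for nonnegative (reciprocal) barriers, whereas $W_{c}$ here is a zeroing-type function that is negative on the invariant set, and Clark's up-crossing construction is exactly the device that handles this. A secondary slip: your choice $0<\delta\le\inf_{\mathbf{x}\in D_{relaxed}}W_{c}(\mathbf{x})$ is empty, since $D_{relaxed}=\{W_{c}>0\}$ has infimum $0$ regardless of boundedness; the paper instead proves the maximal inequality for every $\epsilon>0$ and passes to the limit.
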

where $D_{relaxed} = \left\{ \mathbf{x} \in X \middle| W_{c}\left( \mathbf{x} \right) > 0 \right\}$, and if control action $\mathbf{u}$ satisfies $\mathcal{L}W_{c}\left( \mathbf{x}_{t} \right) \leq - W_{c}\left( \mathbf{x}_{t} \right)$ all the time for $\mathbf{x} \in X\backslash\left( {D_{relaxed} \cup \left\{ 0 \right\}} \right)$, then $P\left\{ \mathbf{x}_{t} \notin D_{relaxed} \middle| \mathbf{x}_{0} \in X\backslash D_{relaxed} \right\} = 1$.
\begin{proof}
First notice that $D\subseteq D_{relaxed}$, then we only need to show that
\begin{eqnarray*}
P\left\{ {\begin{matrix}
{\sup} \\
{t^{\prime} < t} \\
\end{matrix}W_{c}\left( \mathbf{x}_{t^{\prime}} \right) > \epsilon} \right\} < \delta,\forall t > 0,\epsilon > 0,\delta > 0
\end{eqnarray*}
since it follows that
\begin{eqnarray*}
P\left\{ {W_{c}\left( \mathbf{x}_{t^{\prime}} \right) \leq 0,\forall t^{\prime} \in \left\lbrack {0,t} \right\rbrack} \right\} = 1
\end{eqnarray*}
Therefore
\begin{eqnarray*}
P\left\{ {\mathbf{x}_{t} \notin D_{relaxed}} \right\} = {\lim\limits_{t\rightarrow\infty}{P\left\{ {W_{c}\left( \mathbf{x}_{t^{\prime}} \right) \leq 0,\forall t^{\prime} \in \left\lbrack {0,t} \right\rbrack} \right\}}} = 1
\end{eqnarray*}
Let $\theta = \max\left( {W_{c}\left( \mathbf{x}_{0} \right),\frac{- \delta\epsilon}{2t}} \right)$. The initial condition $\mathbf{x}_{0} \in X\backslash D_{relaxed}$ and $W_{c}\left( \mathbf{x}_{0} \right) < 0$ imply that $\theta \leq 0$. Apply Itô formula to the twice differentiable function $W_c$
\begin{eqnarray*}
W_{c}\left( \mathbf{x}_{t} \right) = W_{c}\left( \mathbf{x}_{0} \right) + {\int_{0}^{t}{\mathcal{L}W_{c}\left( \mathbf{x}_{\tau} \right)d\tau}} + {\int_{0}^{t}{\sigma\frac{\partial W_{c}}{\partial\mathbf{x}}dW}}
\end{eqnarray*}
Define a sequence of stopping times $\zeta_i$ and $\eta_i$ for $i=0,1,2,\ldots$
\begin{eqnarray*}
\zeta_{0} = 0,\eta_{0} = \inf\left\{ t \middle| W_{c}\left( \mathbf{x}_{t} \right) < \theta \right\}
\end{eqnarray*}
\begin{eqnarray*}
\zeta_{i} = \inf\left\{ t \middle| W_{c}\left( \mathbf{x}_{t} \right) > \theta,t > \eta_{i - 1} \right\}
\end{eqnarray*}
\begin{eqnarray*}
\eta_{i} = \inf\left\{ t \middle| W_{c}\left( \mathbf{x}_{t} \right) < \theta,t > \zeta_{i - 1} \right\}
\end{eqnarray*}
where $\zeta_i$ and $\eta_i$ are the down-crossings and up-crossings of $W_{c}\left( \mathbf{x}_{t} \right)$ over $\theta$ respectively.
To make use of submartingale inequality, we construct a new random process $U_t$ that only integrates the up-crossings part $\tau\in[\zeta_i,\eta_i]$, namely,
\begin{eqnarray*}
U_{t} = U_{0} + {\sum\limits_{i = 0}^{\infty}\left\lbrack {{\int_{\zeta_{i} \land t}^{\eta_{i} \land t}{- \theta d\tau}} + {\int_{\zeta_{i} \land t}^{\eta_{i} \land t}{\sigma\frac{\partial W_{c}}{\partial\mathbf{x}}dW}}} \right\rbrack}
\end{eqnarray*}
where $U_0=\theta$ and $t_{1} \land t_{2} = \min\left( t_{1},t_{2} \right)$. The following inequality suggests that $U_t$ is a submartingale.
\begin{eqnarray*}
{E\left( U \right.}_{t}\left| U_{0} \right) = U_{0} + E\left( {\sum\limits_{i = 0}^{\infty}\left\lbrack {{\int_{\zeta_{i} \land t}^{\eta_{i} \land t}{- \theta d\tau}} + {\int_{\zeta_{i} \land t}^{\eta_{i} \land t}{\sigma\frac{\partial W_{c}}{\partial\mathbf{x}}dW}}} \right\rbrack} \right) = U_{0} + E\left( {\sum\limits_{i = 0}^{\infty}{\int_{\zeta_{i} \land t}^{\eta_{i} \land t}{- \theta d\tau}}} \right) \geq U_{0}
\end{eqnarray*}
We now prove $W_{c}\left( \mathbf{x}_{t} \right) \leq U_{t}$,$\theta\le U_t$ by induction.

\begin{enumerate}[a.]
\item $t = \zeta_{0}$,
\begin{eqnarray*}
W_{c}\left( \mathbf{x}_{0} \right) \leq \max\left( {W_{c}\left( \mathbf{x}_{0} \right),\frac{- \delta\epsilon}{2t}} \right) = \theta = U_{0}
\end{eqnarray*}

\item\label{b.} $t \in \left( {\zeta_{i},\eta_{i}} \right\rbrack$,
\begin{eqnarray*}
W_{c}\left( \mathbf{x}_{t} \right) = W_{c}\left( \mathbf{x}_{\zeta_i} \right) + {\int_{\zeta_{i}}^{t}{\mathcal{L}W_{c}\left( \mathbf{x}_{\mathbf{\tau}} \right) d\tau}} + {\int_{\zeta_{i}}^{t}{\sigma\frac{\partial W_{c}}{\partial\mathbf{x}}dW}}
\end{eqnarray*}
\begin{eqnarray*}
U_{t} = U_{\zeta_{i}} + {\int_{\zeta_{i}}^{t}{- \theta d\tau}} + {\int_{\zeta_{i}}^{t}{\sigma\frac{\partial W_{c}}{\partial\mathbf{x}}dW}}
\end{eqnarray*}
Suppose the conclusion holds up to $t=\zeta_i$, i.e. $W_{c}\left( \mathbf{x}_{\zeta_i} \right) \leq U_{\zeta_{i}},\theta \leq U_{\zeta_{i}}$. Note that $\mathcal{L}W_{c}\left( \mathbf{x} \right) \leq - W_{c}\left( \mathbf{x} \right)$ holds for $\mathbf{x} \in X\backslash\left( {D_{relaxed} \cup \left\{ 0 \right\}} \right)$, and that $W_{c}\left( \mathbf{x} \right)$ has a minimum at $\mathbf{x} = 0$, which means $\mathbf{x}_{t}=0$ is excluded in this situation, then the constraint $\mathcal{L}W_{c}\left( \mathbf{x}_{t} \right) \leq - W_{c}\left( \mathbf{x}_{t} \right)$ always holds during $t\in\left(\zeta_i,\eta_i\right]$ if $\mathbf{x}_{t} \in X\backslash D_{relaxed}$. By definitions of $\zeta_i$, $W_{c}\left( \mathbf{x}_{t} \right) \geq W_{c}\left( \mathbf{x}_{\zeta_{\mathbf{i}}} \right) = \theta$, hence $\mathcal{L}W_{c}\left( \mathbf{x}_{t} \right) \leq - W_{c}\left( \mathbf{x}_{t} \right) \leq - \theta$, which implies $W_{c}\left( \mathbf{x}_{t} \right) \leq U_{t}$.Therefore $\theta \leq W_{c}\left( \mathbf{x}_{t} \right) \leq U_{t}$.

\item  By definitions of $\eta_i$ and $U_t$, we have $W_{c}\left( \mathbf{x}_{t} \right) \leq W_{c}\left( \mathbf{x}_{\eta_{i}} \right) = \theta$ and $U_t=U_{\eta_i}$. Since $W_{c}\left( \mathbf{x}_{\eta_{i}} \right) \leq U_{\eta_{i}},\theta \leq U_{\eta_{i}}$ by induction in \ref{b.}, we have that $W_{c}\left( \mathbf{x}_{t} \right){\leq W_{c}\left( \mathbf{x}_{\eta_{i}} \right) = \theta \leq U_{\eta_{i}} = U}_{t}$.
\end{enumerate}

Therefore $W_{c}\left( \mathbf{x}_{\mathbf{t}} \right) \leq U_{t}$,$\theta\le U_t$ holds for all $t$.
It follows that
\begin{eqnarray*}
P\left\{ {\begin{matrix}
{\sup} \\
{t^{\prime} < t} \\
\end{matrix}W_{c}\left( \mathbf{x}_{t^{\prime}} \right) > \epsilon} \right\} \leq P\left\{ {\begin{matrix}
{\sup} \\
{t^{\prime} < t} \\
\end{matrix}U_{t^{\prime}} > \epsilon} \right\}
\end{eqnarray*}
We apply the submartingale inequality of Theorem 3.8 in Section 1\cite{Karatzas2012} and get
\begin{eqnarray*}
\epsilon P\left\{ {\begin{matrix}
{\sup} \\
{t^{\prime} < t} \\
\end{matrix}U_{t^{\prime}} > \epsilon} \right\} \leq E\left( {\max\left( {0,U_{t}} \right)} \right) = E\left( U_{t} \right) + E\left( {\max\left( {0,{- U}_{t}} \right)} \right)
\end{eqnarray*}
The conclusion $\theta\le U_t$ implies that $\theta \leq \begin{matrix}
{\min} \\
t \\
\end{matrix}U_{t} \leq - E\left( {\max\left( {0,{- U}_{t}} \right)} \right)$, and hence $E\left(\max\left(0,{-U}_t\right)\right)\le-\theta$.
Moreover, 
\begin{eqnarray*}
E\left( U_{t} \right) = U_{0} + E\left( {\sum\limits_{i = 0}^{\infty}{\int_{\zeta_{i} \land t}^{\eta_{i} \land t}{- \theta d\tau}}} \right) \leq U_{0} - \theta t = \theta - \theta t
\end{eqnarray*}
Since $E\left( {\max\left( {0,U_{t}} \right)} \right) = E\left( U_{t} \right) + E\left( {\max\left( {0,{- U}_{t}} \right)} \right) \leq \theta - \theta t - \theta = - \theta t$, we can obtain
\begin{eqnarray*}
P\left\{ {\begin{matrix}
{\sup} \\
{t^{\prime} < t} \\
\end{matrix}W_{c}\left( \mathbf{x}_{t^{\prime}} \right) > \epsilon} \right\} \leq P\left\{ {\begin{matrix}
{\sup} \\
{t^{\prime} < t} \\
\end{matrix}U_{t^{\prime}} > \epsilon} \right\} \leq \frac{- \theta t}{\epsilon} \leq \frac{\delta\epsilon}{2t}\frac{t}{\epsilon} < \delta
\end{eqnarray*}
which completes the proof.
\end{proof}
The following theorem shows the main property of stochastic control Lyapunov-barrier function.
\begin{theorem}
Given an unsafe region $D\subseteq X$, if there exists a stochastic control Lyapunov-barrier function $W_c:X\rightarrow R$ for system (\ref{eq1}) and if control action $\mathbf{u}_{t}$ satisfies $\mathcal{L}W_{c}\left( \mathbf{x}_{t} \right) < 0$ all the time, then $P\left\{ \mathbf{x}_{t} \notin D \right\} = 1,\forall\mathbf{x}_{0} \in X\backslash D_{relaxed}$ and the zero solution $\mathbf{x}_{t} \equiv 0$ is asymptotically stable in probability.
\end{theorem}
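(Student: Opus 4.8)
The statement splits into a safety claim, $P\{\mathbf{x}_t\notin D\}=1$ for every $\mathbf{x}_0\in X\backslash D_{relaxed}$, and an asymptotic-stability-in-probability claim, and my plan is to treat them in that order, leaning on Proposition~\ref{pro1} for the former and on classical stochastic Lyapunov theory for the latter.

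For safety, the plan is to show that the hypotheses of Proposition~\ref{pro1} are automatically in force here. On the set $X\backslash(D_{relaxed}\cup\{0\})$ one has $W_c(\mathbf{x})\le 0$ by the very definition $D_{relaxed}=\{\mathbf{x}\in X\mid W_c(\mathbf{x})>0\}$, hence $-W_c(\mathbf{x})\ge 0$ there. Consequently the assumed running inequality $\mathcal{L}W_c(\mathbf{x}_t)<0$ yields $\mathcal{L}W_c(\mathbf{x}_t)<0\le -W_c(\mathbf{x}_t)$ whenever $\mathbf{x}_t\in X\backslash(D_{relaxed}\cup\{0\})$, which is exactly the drift condition required by Proposition~\ref{pro1}. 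Applying that proposition gives $P\{\mathbf{x}_t\notin D_{relaxed}\mid \mathbf{x}_0\in X\backslash D_{relaxed}\}=1$; and since $W_c>0$ on $D$ forces $D\subseteq\{W_c>0\}=D_{relaxed}$, the event $\{\mathbf{x}_t\notin D_{relaxed}\}$ is contained in $\{\mathbf{x}_t\notin D\}$, so the safety claim follows.

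For asymptotic stability in probability, the plan is to convert $W_c$ into a genuine Lyapunov function by the shift $V(\mathbf{x}):=W_c(\mathbf{x})-W_c(0)$. Since $W_c$ is twice differentiable with a minimum at the origin, $V$ is twice differentiable, $V(0)=0$, and $V\ge 0$ on a neighbourhood of the origin; subtracting the constant $W_c(0)$ leaves the generator unchanged, so $\mathcal{L}V=\mathcal{L}W_c$. The safety step guarantees that, for $\mathbf{x}_0\in X\backslash D_{relaxed}$, the trajectory stays in $X\backslash D_{relaxed}$ for all $t$ with probability one, so along such trajectories the decrease condition $\mathcal{L}W_c(\mathbf{x}_t)<0$ for $\mathbf{x}_t\neq 0$ — available from the CLBF property (\ref{eq10}) together with the chosen control — transfers to $\mathcal{L}V(\mathbf{x}_t)<0$ for $\mathbf{x}_t\neq 0$. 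A positive-definite, twice-differentiable $V$ with negative-definite generator is precisely the hypothesis of the standard stochastic Lyapunov theorem (see \cite{Haddad2019}), which then delivers asymptotic stability in probability of the zero solution.

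The main obstacle I anticipate is not the invocation of the two results but the bookkeeping that makes them applicable. One must argue carefully that the origin is legitimately excluded in both the Proposition~\ref{pro1} drift condition and the Lyapunov condition — it is, because $W_c$ attains its minimum there, so once $\mathbf{x}_t\neq 0$ the relevant inequalities hold — and one must ensure that $V=W_c-W_c(0)$ is genuinely \emph{positive definite}, not merely nonnegative, near the origin; this is where the assumption that $W_c$ has a nondegenerate isolated minimum at the origin is really used, and in the construction of Section~\ref{sec3} it will be inherited from the underlying positive-definite CLF. A secondary point is that, absent radial unboundedness of $W_c$ on $X$, the conclusion is \emph{local} asymptotic stability in probability; and reading ``$\mathcal{L}W_c(\mathbf{x}_t)<0$ all the time'' as generated by a state feedback makes $\mathcal{L}W_c$ a function of the state, so the cited Lyapunov theorem applies verbatim.
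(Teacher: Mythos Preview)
Your proposal is correct and follows essentially the same route as the paper: invoke Proposition~\ref{pro1} after noting that $W_c\le 0$ on $X\backslash D_{relaxed}$ turns $\mathcal{L}W_c<0$ into $\mathcal{L}W_c\le -W_c$, then use $D\subseteq D_{relaxed}$ for safety; and for stability, shift to $V_c=W_c-W_c(0)$ and apply the stochastic Lyapunov theorem via $\mathcal{L}V_c=\mathcal{L}W_c<0$. Your additional remarks on positive definiteness of $V_c$ and the local nature of the conclusion are valid caveats that the paper's proof glosses over.
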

\begin{proof}
It can be observed that $W_{c}\left( \mathbf{x} \right) \leq 0,\forall\mathbf{x} \in X\backslash D_{relaxed}$. Since the definition of $W_{c}\left( \mathbf{x} \right)$ implies $\begin{matrix}
{\inf} \\
{u \in U} \\
\end{matrix}\mathcal{L}W_{c}\left( \mathbf{x} \right) < 0 \leq - W_{c}\left( \mathbf{x} \right),\forall\mathbf{x} \in X\backslash\left( D_{relaxed} \cup \left\{ 0 \right\} \right)$, if control action $\mathbf{u}$ satisfies $\mathcal{L}W_{c}\left( \mathbf{x} \right) \leq - W_{c}\left( \mathbf{x} \right)$ all the time, we have that $P\left\{ {\mathbf{x}_{t} \notin D_{relaxed}} \right\} = 1,\forall\mathbf{x}_{0} \in X\backslash D_{relaxed}$ by Proposition \ref{pro1}, which implies $
P\left\{ {\mathbf{x}_{t} \notin D} \right\} = 1,\forall\mathbf{x}_{0} \in X\backslash D_{relaxed}$.

We now proof $\mathbf{x}_{t} \equiv 0$ is asymptotically stable in probability.
Let
\begin{eqnarray*}
V_{c}\left( \mathbf{x} \right) = W_{c}\left( \mathbf{x} \right) - W_{c}(0)
\end{eqnarray*}
Therefore $V_{c}\left( \mathbf{x} \right)$ is twice differentiable and
\begin{eqnarray*}
V_{c}(0) = 0
\end{eqnarray*}
\begin{eqnarray*}
V_{c}\left( \mathbf{x} \right) > 0,\forall\mathbf{x} \in X,\mathbf{x} \neq 0
\end{eqnarray*}
\begin{eqnarray*}
\begin{matrix}
{\inf} \\
{u \in U} \\
\end{matrix}\mathcal{L}V_{c}\left( \mathbf{x} \right) = \begin{matrix}
{\inf} \\
{u \in U} \\
\end{matrix}\mathcal{L}W_{c}\left( \mathbf{x} \right) < 0,\forall\mathbf{x} \in X\backslash\left( {D_{relaxed} \cup \left\{ 0 \right\}} \right)
\end{eqnarray*}
Then $V_{c}\left( \mathbf{x} \right)$ is a stochastic control Lyapunov function of system (\ref{eq1}). Since constraints on control action $\mathcal{L}V_{c}\left( \mathbf{x}_{t} \right) = \mathcal{L}W_{c}\left( \mathbf{x}_{t} \right) < 0$ hold all the time, the zero solution $\mathbf{x}_{t} \equiv 0$ is asymptotically stable in probability, which completes the proof.
\end{proof}

\section{Design of Stochastic CLF and CBF}\label{sec3}
This section consists of stochastic CLF and CBF design. We will first introduce a stochastic CLF design approach for differentially flat system. Since control barrier function is required to be constant in safe regions to reduce its impact on CLF, a kind of stochastic CBF is presented next. Finally, CLF and CBFs are combined to form a stochastic control Lyapunov-barrier function.

\subsection{Stochastic CLF Design through Dynamic Feedback Linearization}
In this section, we consider the stochastic nonlinear affine system (\ref{eq1}) with $\mathbf{\sigma}\left( \mathbf{x} \right) = diag\left( {\sigma_{1}\left( \mathbf{x} \right),\cdots,\sigma_{n}\left( \mathbf{x} \right)} \right)$. Inspired by CLF design for differentially flat systems\cite{Kubo2020}\cite{Kuga2016}, we can construct CLF in stochastic setting through dynamic feedback linearization. We will show that an unconstrained stochastic CLF (without control constraints) can be constructed via dynamic feedback linearization if we design a quadratic CLF for the linearized system.
Let $\bar{\mathbf{x}} = \left\lbrack {\mathbf{x},~p} \right\rbrack^{T}$, where $p$ is a component of $\mathbf{u}$. The augmented system is given by
\begin{eqnarray}\label{eq15}
d\bar{\mathbf{x}} = \left\lbrack {\bar{\mathbf{f}}\left( \bar{\mathbf{x}} \right) + \bar{\mathbf{g}}\left( \bar{\mathbf{x}} \right)\bar{\mathbf{u}}} \right\rbrack dt + \bar{\mathbf{\sigma}}\left( \bar{\mathbf{x}} \right)dW
\end{eqnarray}
where
\begin{eqnarray*}
\bar{\mathbf{f}}\left( \mathbf{x} \right) = \begin{bmatrix}
{\mathbf{f}\left( \mathbf{x} \right)} \\
0 \\
\end{bmatrix}
\end{eqnarray*}
\begin{eqnarray*}
\bar{\mathbf{g}}\left( \bar{\mathbf{x}} \right) = \begin{bmatrix}
{\mathbf{g}\left( \mathbf{x} \right)} & 0 \\
0 & 1 \\
\end{bmatrix},\bar{\mathbf{u}} = \begin{bmatrix}
\mathbf{u} \\
\dot{p} \\
\end{bmatrix}
\end{eqnarray*}
\begin{eqnarray*}
\bar{\mathbf{\sigma}}\left( \bar{\mathbf{x}} \right) = \begin{bmatrix}
{\mathbf{\sigma}\left( \mathbf{x} \right)} \\
0 \\
\end{bmatrix}
\end{eqnarray*}
We assume there exists a diffeomorphism $\left. \mathbf{\phi}:\mathbb{R}^{n + 1}\rightarrow\mathbb{R}^{n + 1} \right.$ that transforms the augmented system to a semi-linear system
\begin{eqnarray}\label{eq16}
d\mathbf{z} = \left( {\mathbf{A}\mathbf{z} + \mathbf{B}\mathbf{u}^{\prime}} \right)dt + \mathbf{\sigma}^{\prime}\left( \mathbf{z} \right)dW
\end{eqnarray}
where
\begin{eqnarray*}
\mathbf{z} = \mathbf{\phi}\left( \bar{\mathbf{x}} \right) = \mathbf{\alpha}\left( \mathbf{x} \right)p + \mathbf{\beta}\left( \mathbf{x} \right)
\end{eqnarray*}
\begin{eqnarray*}
\mathbf{A}\mathbf{z} + \mathbf{B}\mathbf{u}^{\prime} = \left. {\frac{\partial\mathbf{\phi}}{\partial\bar{\mathbf{x}}}\left( {\bar{\mathbf{f}} + \bar{\mathbf{g}}\bar{\mathbf{u}}} \right)} \right|_{\bar{\mathbf{x}} = \mathbf{\phi}^{- 1}{(\mathbf{z})}} + \begin{bmatrix}
{\frac{1}{2}\mathrm{tr}\left( {\mathbf{\sigma}^{T}\left( \bar{\mathbf{x}} \right)\frac{\partial^{2}\phi_{1}}{\partial{\bar{\mathbf{x}}}^{2}}\mathbf{\sigma}\left( \bar{\mathbf{x}} \right)} \right)} \\
 \vdots \\
{\frac{1}{2}\mathrm{tr}\left( {\mathbf{\sigma}^{T}\left( \bar{\mathbf{x}} \right)\frac{\partial^{2}\phi_{n + 1}}{\partial{\bar{\mathbf{x}}}^{2}}\mathbf{\sigma}\left( \bar{\mathbf{x}} \right)} \right)} \\
\end{bmatrix}_{\bar{\mathbf{x}} = \mathbf{\phi}^{- 1}{(\mathbf{z})}}
\end{eqnarray*}
\begin{eqnarray*}
\mathbf{\sigma}^{\prime}\left( \mathbf{z} \right) = \left. {\frac{\partial\mathbf{\phi}}{\partial\bar{\mathbf{x}}}\bar{\mathbf{\sigma}}\left( \mathbf{x} \right)} \right|_{\bar{\mathbf{x}} = \mathbf{\phi}^{- 1}{(\mathbf{z})}}
\end{eqnarray*}
Note that $\mathbf{Az}$ involves no components of $\bar{\mathbf{u}}$ except $p$, hence $\mathbf{u}^{\prime}$ contains the rest. Assume $p=u_k$ for convenience, and we can divide $\frac{\partial\mathbf{\phi}}{\partial\bar{\mathbf{x}}}\bar{\mathbf{g}}\bar{\mathbf{u}}$ into two parts,
\begin{eqnarray*}
\frac{\partial\mathbf{\phi}}{\partial\bar{\mathbf{x}}}\bar{\mathbf{g}}\bar{\mathbf{u}} = \mathbf{\gamma}\left( \bar{\mathbf{x}} \right)p + \mathbf{B}\mathbf{u}^{\prime}
\end{eqnarray*}
where every component of $\mathbf{u}^{\prime}$ is the linear combination of components of $\bar{\mathbf{u}}$ except $u_k$, namely,
\begin{eqnarray*}
\mathbf{u}^{\prime} = f_{u}\left( \bar{\mathbf{x}} \right)\begin{bmatrix}
u_{1} \\
 \vdots \\
u_{k - 1} \\
u_{k + 1} \\
 \vdots \\
u_{m} \\
\dot{p} \\
\end{bmatrix}
\end{eqnarray*}
with $f_{u} \in \mathbb{R}^{m \times m}$. Since $\mathbf{\phi}$ is a diffeomorphism, $f_u$ has full rank. Therefore, 
\begin{eqnarray*}
\mathbf{A}\mathbf{z} = \frac{\partial\mathbf{\phi}}{\partial\bar{\mathbf{x}}}\bar{\mathbf{f}} + \begin{bmatrix}
{\frac{1}{2}\mathrm{tr}\left( {\mathbf{\sigma}^{T}\left( \bar{\mathbf{x}} \right)\frac{\partial^{2}\phi_{1}}{\partial{\bar{\mathbf{x}}}^{2}}\mathbf{\sigma}\left( \bar{\mathbf{x}} \right)} \right)} \\
 \vdots \\
{\frac{1}{2}\mathrm{tr}\left( {\mathbf{\sigma}^{T}\left( \bar{\mathbf{x}} \right)\frac{\partial^{2}\phi_{n + 1}}{\partial{\bar{\mathbf{x}}}^{2}}\mathbf{\sigma}\left( \bar{\mathbf{x}} \right)} \right)} \\
\end{bmatrix} + \mathbf{\gamma}\left( \bar{\mathbf{x}} \right)p
\end{eqnarray*}
\begin{proposition}
If there exists an unconstrained stochastic control Lyapunov function $\bar{V}\left( \mathbf{z} \right) = \mathbf{z}^{T}\mathbf{P}\mathbf{z}$ for the linearized system (\ref{eq16}), the function $V\left( \mathbf{x} \right) = {\min\limits_{p}{\bar{V}\left( {\mathbf{x},p} \right)}}$, where $\bar{V}\left( {\mathbf{x},p} \right) = \bar{V}\left( \mathbf{z} \right)\left. \hspace{0pt} \right|_{\mathbf{z} = \mathbf{\phi}{(\bar{\mathbf{x}})}}$, will also be an unconstrained stochastic control Lyapunov function for system (\ref{eq1}).
\end{proposition}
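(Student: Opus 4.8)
\emph{Proof strategy.} The plan is to verify, for $V$ and in the unconstrained case $U=\mathbb{R}^{m}$, the three requirements of a stochastic CLF in Definition~\ref{def1}: twice differentiability, positive definiteness, and $\inf_{\mathbf{u}}\mathcal{L}V(\mathbf{x})<0$ for $\mathbf{x}\neq 0$. Writing $\bar{V}(\mathbf{x},p)=(\mathbf{\alpha}(\mathbf{x})p+\mathbf{\beta}(\mathbf{x}))^{T}\mathbf{P}(\mathbf{\alpha}(\mathbf{x})p+\mathbf{\beta}(\mathbf{x}))$, this is, as a function of $p$, a quadratic with leading coefficient $\mathbf{\alpha}(\mathbf{x})^{T}\mathbf{P}\mathbf{\alpha}(\mathbf{x})>0$, since $\mathbf{P}>0$ and $\mathbf{\alpha}(\mathbf{x})=\partial\mathbf{\phi}/\partial p\neq 0$ ($\mathbf{\phi}$ being a diffeomorphism). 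Hence the minimizer $p^{*}(\mathbf{x})=-\mathbf{\alpha}^{T}\mathbf{P}\mathbf{\beta}/(\mathbf{\alpha}^{T}\mathbf{P}\mathbf{\alpha})$ exists, is unique, and is as regular as $\mathbf{\alpha},\mathbf{\beta}$, so $V(\mathbf{x})=\bar{V}(\mathbf{x},p^{*}(\mathbf{x}))$ is twice differentiable; and $V\ge 0$ since $\bar{V}\ge 0$, with $V(\mathbf{x})=0$ forcing $\mathbf{\phi}(\mathbf{x},p^{*}(\mathbf{x}))=0$, i.e. $(\mathbf{x},p^{*}(\mathbf{x}))=\mathbf{\phi}^{-1}(0)=0$ (the origin being the common equilibrium), so $\mathbf{x}=0$ and $V$ has a strict minimum at the origin.

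For the decrease condition I would compare $\mathcal{L}V$ along (\ref{eq1}) with the generator $\bar{\mathcal{L}}$ of (\ref{eq16}) at $\bar{\mathbf{x}}=(\mathbf{x},p^{*}(\mathbf{x}))$. The envelope theorem (using $\partial_{p}\bar{V}|_{p^{*}}=0$) gives $\nabla_{\mathbf{x}}V=\nabla_{\mathbf{x}}\bar{V}|_{p^{*}}$, and differentiating the first-order condition $\partial_{p}\bar{V}(\mathbf{x},p^{*}(\mathbf{x}))=0$ in $\mathbf{x}$ gives $\partial^{2}_{\mathbf{x}p}\bar{V}|_{p^{*}}=-s\,\nabla_{\mathbf{x}}p^{*}$ with $s:=\partial^{2}_{pp}\bar{V}|_{p^{*}}=2\mathbf{\alpha}^{T}\mathbf{P}\mathbf{\alpha}>0$; a short chain-rule computation then shows that the Hessian of $V$ equals the $\mathbf{x}$-$\mathbf{x}$ block of the Hessian of $\bar{V}$ at $(\mathbf{x},p^{*})$ minus $s\,(\nabla_{\mathbf{x}}p^{*})(\nabla_{\mathbf{x}}p^{*})^{T}$. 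Because the $p$-component of $\bar{\mathbf{\sigma}}$ vanishes, this means the It\^o correction term of $V$ in $\mathcal{L}V$ equals the corresponding correction term of $\bar{V}$ for the augmented system (\ref{eq15}) minus $\frac{s}{2}|\mathbf{\sigma}^{T}\nabla_{\mathbf{x}}p^{*}|^{2}\ge 0$. Combined with $\nabla_{\mathbf{x}}V=\nabla_{\mathbf{x}}\bar{V}|_{p^{*}}$ and $\partial_{p}\bar{V}|_{p^{*}}=0$, it follows that for any control $\mathbf{u}$ with $u_{k}=p^{*}(\mathbf{x})$ (permitted since $U=\mathbb{R}^{m}$), $\mathcal{L}V(\mathbf{x})|_{\mathbf{u}}\le\bar{\mathcal{L}}^{\mathrm{aug}}\bar{V}(\mathbf{x},p^{*})|_{\bar{\mathbf{u}}}$, where $\bar{\mathcal{L}}^{\mathrm{aug}}$ is the generator of (\ref{eq15}) and $\bar{\mathbf{u}}=(\mathbf{u},\dot{p})$ with $\dot{p}$ arbitrary. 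The sign of this correction term, which relies on $\mathbf{P}>0$, is exactly what makes the inequality go the right way, and I expect the Hessian identity together with this sign check to be the main obstacle.

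Finally I would use the covariance of the generator under the change of variables $\mathbf{z}=\mathbf{\phi}(\bar{\mathbf{x}})$: since (\ref{eq16}) is obtained from (\ref{eq15}) by It\^o's formula, $\bar{\mathcal{L}}^{\mathrm{aug}}(G\circ\mathbf{\phi})(\bar{\mathbf{x}})=(\bar{\mathcal{L}}G)(\mathbf{\phi}(\bar{\mathbf{x}}))$ for every twice differentiable $G$; taking $G=\bar{V}$ yields $\bar{\mathcal{L}}^{\mathrm{aug}}\bar{V}(\mathbf{x},p^{*})|_{\bar{\mathbf{u}}}=\bar{\mathcal{L}}\bar{V}(\mathbf{z}^{*})|_{\mathbf{u}'}$ at $\mathbf{z}^{*}=\mathbf{\phi}(\mathbf{x},p^{*})$, with $\mathbf{u}'=f_{u}(\bar{\mathbf{x}})[u_{1},\dots,u_{k-1},u_{k+1},\dots,u_{m},\dot{p}]^{T}$. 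Fix $\mathbf{x}\neq 0$; then $\mathbf{z}^{*}\neq 0$, so the CLF property of $\bar{V}$ for (\ref{eq16}) supplies some $\mathbf{u}'_{0}$ with $\bar{\mathcal{L}}\bar{V}(\mathbf{z}^{*})|_{\mathbf{u}'_{0}}<0$. Setting the reduced augmented control $[u_{1},\dots,u_{k-1},u_{k+1},\dots,u_{m},\dot{p}]^{T}=f_{u}(\bar{\mathbf{x}})^{-1}\mathbf{u}'_{0}$ (possible since $f_{u}$ has full rank), reading off the $u_{j}$ for $j\neq k$, and taking $u_{k}=p^{*}(\mathbf{x})$ produces $\mathbf{u}\in\mathbb{R}^{m}$ with $\mathcal{L}V(\mathbf{x})|_{\mathbf{u}}\le\bar{\mathcal{L}}\bar{V}(\mathbf{z}^{*})|_{\mathbf{u}'_{0}}<0$; hence $\inf_{\mathbf{u}}\mathcal{L}V(\mathbf{x})<0$ for all $\mathbf{x}\neq 0$, and $V$ is an unconstrained stochastic CLF for (\ref{eq1}). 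The covariance identity for $\bar{\mathcal{L}}$ under $\mathbf{\phi}$ is routine once formulated, so the argument rests essentially on the two envelope-type identities above.
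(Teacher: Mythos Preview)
Your argument is correct and follows the same two–step skeleton as the paper: first pass from the linearized system (\ref{eq16}) to the augmented system (\ref{eq15}) via covariance of the generator under the diffeomorphism $\mathbf{\phi}$, then descend from $\bar V(\bar{\mathbf{x}})$ to $V(\mathbf{x})=\min_{p}\bar V(\mathbf{x},p)$ via a Hessian inequality. The differences are mostly in packaging. The paper verifies the Artstein--Sontag condition (show $\mathcal{L}V<0$ whenever $L_{g}V=0$) and carries out the second step by explicit algebra in the coefficients $V_{1}=\mathbf{\alpha}^{T}\mathbf{P}\mathbf{\alpha}$, $V_{2}=\mathbf{\alpha}^{T}\mathbf{P}\mathbf{\beta}$, $V_{3}=\mathbf{\beta}^{T}\mathbf{P}\mathbf{\beta}$; it compares only the \emph{diagonal} entries $(\partial^{2}V/\partial\mathbf{x}^{2})_{ii}\le(\partial^{2}\bar V/\partial\mathbf{x}^{2})_{ii}|_{p=p_{0}}$, which suffices because the paper assumes $\mathbf{\sigma}(\mathbf{x})$ is diagonal. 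You instead construct an explicit control with $u_{k}=p^{*}(\mathbf{x})$ and use the envelope identity $\nabla^{2}V=\nabla^{2}_{\mathbf{x}\mathbf{x}}\bar V|_{p^{*}}-s\,(\nabla p^{*})(\nabla p^{*})^{T}$, obtaining the full positive–semidefinite ordering $\nabla^{2}V\preceq\nabla^{2}_{\mathbf{x}\mathbf{x}}\bar V|_{p^{*}}$; this yields the trace inequality for arbitrary $\mathbf{\sigma}$, so your route is a bit more general and cleaner. One small point worth making explicit when you write it up: the covariance identity $\bar{\mathcal{L}}^{\mathrm{aug}}\bar V(\bar{\mathbf{x}})=\bar{\mathcal{L}}\bar V(\mathbf{z})$ as used in the paper relies on the identification $u_{k}=p$ in the splitting $\frac{\partial\mathbf{\phi}}{\partial\bar{\mathbf{x}}}\bar{\mathbf{g}}\bar{\mathbf{u}}=\mathbf{\gamma}(\bar{\mathbf{x}})p+\mathbf{B}\mathbf{u}'$, so your choice $u_{k}=p^{*}(\mathbf{x})$ at the point $(\mathbf{x},p^{*}(\mathbf{x}))$ is exactly what is needed; you already have this, but state it as a hypothesis rather than leaving it to the reader.
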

\begin{proof}
The proof consists of two steps. We will first illustrate that if we design an unconstrained quadratic control Lyapunov function $\bar{V}\left( \mathbf{z} \right) = \mathbf{z}^{T}\mathbf{P}\mathbf{z}$ for system (\ref{eq16}), $\bar{V}\left( \bar{\mathbf{x}} \right) = \bar{V}\left( \mathbf{z} \right)\left. \hspace{0pt} \right|_{\mathbf{z} = \mathbf{\phi}{(\bar{\mathbf{x}})}}$  will be an unconstrained control Lyapunov function for (\ref{eq15}).
Notice that 
\begin{eqnarray*}
\frac{\partial^{2}\bar{V}}{\partial{\bar{\mathbf{x}}}^{2}} = \frac{\partial\mathbf{\phi}~}{\partial\bar{\mathbf{x}}}^{T}\frac{\partial^{2}\bar{V}}{\partial\mathbf{z}^{2}}\frac{\partial\mathbf{\phi}}{\partial\bar{\mathbf{x}}} + {\sum\limits_{i = 1}^{n + 1}{\frac{\partial\bar{V}}{\partial z_{i}}\frac{\partial^{2}\phi_{i}}{\partial{\bar{\mathbf{x}}}^{2}}}}
\end{eqnarray*}
\begin{eqnarray*}
\mathrm{tr}\left( {{\bar{\mathbf{\sigma}}}^{T}\frac{\partial^{2}\phi_{i}}{\partial{\bar{\mathbf{x}}}^{2}}\bar{\mathbf{\sigma}}} \right) = {\sum\limits_{i = 1}^{n}{{\bar{\mathbf{\sigma}}}_{i}^{T}\frac{\partial^{2}\phi_{i}}{\partial{\bar{\mathbf{x}}}^{2}}{\bar{\mathbf{\sigma}}}_{i}}}
\end{eqnarray*}
where $\bar{\mathbf{\sigma}} = \begin{bmatrix}
{\bar{\mathbf{\sigma}}}_{1} & \cdots & {\bar{\mathbf{\sigma}}}_{n} \\
\end{bmatrix}$, and thus we have
\begin{eqnarray*}
\frac{1}{2}\mathrm{tr}\left( {{\mathbf{\sigma}^{\prime}}^{T}\frac{\partial^{2}\bar{V}}{\partial\mathbf{z}^{2}}\mathbf{\sigma}^{\prime}} \right) = \frac{1}{2}\mathrm{tr}\left( {{\bar{\mathbf{\sigma}}}^{T}\left( {\frac{\partial^{2}\bar{V}}{\partial{\bar{\mathbf{x}}}^{2}} - {\sum\limits_{i = 1}^{n + 1}{\frac{\partial\bar{V}}{\partial z_{i}}\frac{\partial^{2}\phi_{i}}{\partial{\bar{\mathbf{x}}}^{2}}}}} \right)\bar{\mathbf{\sigma}}} \right) = \frac{1}{2}\mathrm{tr}\left( {{\bar{\mathbf{\sigma}}}^{T}\frac{\partial^{2}\bar{V}}{\partial{\bar{\mathbf{x}}}^{2}}\bar{\mathbf{\sigma}}} \right) - \frac{1}{2}\frac{\partial\bar{V}}{\partial\mathbf{z}}\begin{bmatrix}
{\sum\limits_{i = 1}^{n}{{\bar{\mathbf{\sigma}}}_{i}^{T}\frac{\partial^{2}\phi_{1}}{\partial{\bar{\mathbf{x}}}^{2}}{\bar{\mathbf{\sigma}}}_{i}}} \\
 \vdots \\
{\sum\limits_{i = 1}^{n}{{\bar{\mathbf{\sigma}}}_{i}^{T}\frac{\partial^{2}\phi_{n + 1}}{\partial{\bar{\mathbf{x}}}^{2}}{\bar{\mathbf{\sigma}}}_{i}}} \\
\end{bmatrix}
\end{eqnarray*}
which yields
\begin{align*}
\mathcal{L}\bar{V}\left( \bar{\mathbf{x}} \right) & = \frac{\partial\bar{V}}{\partial\bar{\mathbf{x}}}\left( {\bar{\mathbf{f}} + \bar{\mathbf{g}}\bar{\mathbf{u}}} \right) + \frac{1}{2}\mathrm{tr}\left( {{\bar{\mathbf{\sigma}}}^{T}\frac{\partial^{2}\bar{V}}{\partial{\bar{\mathbf{x}}}^{2}}\bar{\mathbf{\sigma}}} \right) \\
& = \frac{\partial\bar{V}}{\partial\mathbf{z}}\frac{\partial\mathbf{\phi}}{\partial\bar{\mathbf{x}}}\left( {\bar{\mathbf{f}} + \bar{\mathbf{g}}\bar{\mathbf{u}}} \right) + \frac{\partial\bar{V}}{\partial\mathbf{z}}\begin{bmatrix}
{\frac{1}{2}\mathrm{tr}\left( {{\bar{\mathbf{\sigma}}}^{T}\frac{\partial^{2}\phi_{1}}{\partial{\bar{\mathbf{x}}}^{2}}\bar{\mathbf{\sigma}}} \right)} \\
 \vdots \\
{\frac{1}{2}\mathrm{tr}\left( {{\bar{\mathbf{\sigma}}}^{T}\frac{\partial^{2}\phi_{n + 1}}{\partial{\bar{\mathbf{x}}}^{2}}\bar{\mathbf{\sigma}}} \right)} \\
\end{bmatrix} + \frac{1}{2}\mathrm{tr}\left( {{\bar{\mathbf{\sigma}}}^{T}\frac{\partial^{2}\bar{V}}{\partial{\bar{\mathbf{x}}}^{2}}\bar{\mathbf{\sigma}}} \right) - \frac{1}{2}\frac{\partial\bar{V}}{\partial\mathbf{z}}\begin{bmatrix}
{\sum\limits_{i = 1}^{n}{{\bar{\mathbf{\sigma}}}_{i}^{T}\frac{\partial^{2}\phi_{1}}{\partial{\bar{\mathbf{x}}}^{2}}{\bar{\mathbf{\sigma}}}_{i}}} \\
 \vdots \\
{\sum\limits_{i = 1}^{n}{{\bar{\mathbf{\sigma}}}_{i}^{T}\frac{\partial^{2}\phi_{n + 1}}{\partial{\bar{\mathbf{x}}}^{2}}{\bar{\mathbf{\sigma}}}_{i}}} \\
\end{bmatrix} \\
& = \frac{\partial\bar{V}}{\partial\mathbf{z}}\left( {\mathbf{A}\mathbf{z} + \mathbf{B}\mathbf{u}^{\prime}} \right) + \frac{1}{2}\mathrm{tr}\left( {{\mathbf{\sigma}^{\prime}}^{T}\frac{\partial^{2}\bar{V}}{\partial\mathbf{z}^{2}}\mathbf{\sigma}^{\prime}} \right) = \mathcal{L}\bar{V}\left( \mathbf{z} \right)
\end{align*}
Note that
\begin{eqnarray*}
L_{\bar{\mathbf{g}}}\bar{V}\left( \bar{\mathbf{x}} \right)\bar{\mathbf{u}} = L_{\mathbf{B}}\bar{V}\left( \mathbf{z} \right)\mathbf{u}^{\prime} + \frac{\partial\bar{V}}{\partial\mathbf{z}}\mathbf{\gamma}\left( \bar{\mathbf{x}} \right)p = L_{\bar{\mathbf{g}}}\bar{V}\left( \bar{\mathbf{x}} \right)\begin{bmatrix}
u_{1} \\
 \vdots \\
u_{m} \\
\dot{p} \\
\end{bmatrix} = L_{\mathbf{B}}\bar{V}\left( \mathbf{z} \right)f_{u}\left( \bar{\mathbf{x}} \right)\begin{bmatrix}
u_{1} \\
 \vdots \\
u_{k - 1} \\
u_{k + 1} \\
 \vdots \\
u_{m} \\
\dot{p} \\
\end{bmatrix} + \frac{\partial\bar{V}}{\partial\mathbf{z}}\mathbf{\gamma}\left( \bar{\mathbf{x}} \right)u_{k}
\end{eqnarray*}
If $\bar{V}\left( \mathbf{z} \right)$ is an unconstrained control Lyapunov function for (\ref{eq16}), $\mathcal{L}\bar{V}\left( \mathbf{z} \right)$ becomes negative when $L_{\mathbf{B}}\bar{V}\left( \mathbf{z} \right) = 0$, which implies $\mathcal{L}\bar{V}\left( \bar{\mathbf{x}} \right) = \mathcal{L}\bar{V}\left( \mathbf{z} \right) < 0$ when $L_{\bar{\mathbf{g}}}\bar{V}\left( \bar{\mathbf{x}} \right) = 0$. Therefore $\bar{V}\left( \bar{\mathbf{x}} \right)$ is an unconstrained control Lyapunov function for (\ref{eq15}).

We now design an unconstrained control Lyapunov function $V\left( \mathbf{x} \right)$. In the rest of the proof, we will discuss the relation between $V\left( \mathbf{x} \right)$ and $\bar{V}\left( \bar{\mathbf{x}} \right)$. Similarly, we will show that $V\left( \mathbf{x} \right)$ is an unconstrained control Lyapunov function for (\ref{eq1}) if $\bar{V}\left( \bar{\mathbf{x}} \right)$ is an unconstrained control Lyapunov function for (\ref{eq15}). The infinitesimal generator of $\bar{V}\left( \bar{\mathbf{x}} \right)$ is given by
\begin{eqnarray*}
\mathcal{L}\bar{V}\left( \bar{\mathbf{x}} \right) = \begin{bmatrix}
\frac{\partial\bar{V}}{\partial\mathbf{x}} & \frac{\partial\bar{V}}{\partial p} \\
\end{bmatrix}\begin{bmatrix}
{\mathbf{f} + \mathbf{g}\mathbf{u}} \\
\dot{p} \\
\end{bmatrix} + \frac{1}{2}\mathrm{tr}\left( {\begin{bmatrix}
{\mathbf{\sigma}\left( \mathbf{x} \right)} \\
0 \\
\end{bmatrix}^{T}\frac{\partial^{2}\bar{V}}{\partial{\bar{\mathbf{x}}}^{2}}\begin{bmatrix}
{\mathbf{\sigma}\left( \mathbf{x} \right)} \\
0 \\
\end{bmatrix}} \right) = \frac{\partial\bar{V}}{\partial\mathbf{x}}\left( {\mathbf{f} + \mathbf{g}\mathbf{u}} \right) + \frac{\partial\bar{V}}{\partial p}\dot{p} + \frac{1}{2}{\sum\limits_{i = 1}^{n}{\mathbf{\sigma}_{i}^{T}\frac{\partial^{2}\bar{V}}{\partial\mathbf{x}^{2}}\mathbf{\sigma}_{i}}}
\end{eqnarray*}
We have that $\mathcal{L}\bar{V}\left( \bar{\mathbf{x}} \right) < 0$ when $L_{\bar{\mathbf{g}}}\bar{V}\left( \bar{\mathbf{x}} \right) = 0$, i.e.,
\begin{eqnarray}\label{eq17}
\mathcal{L}\bar{V}\left( \bar{\mathbf{x}} \right) = \frac{\partial\bar{V}}{\partial\mathbf{x}}\mathbf{f} + \frac{1}{2}{\sum\limits_{i = 1}^{n}{\mathbf{\sigma}_{i}^{T}\frac{\partial^{2}\bar{V}}{\partial\mathbf{x}^{2}}\mathbf{\sigma}_{i}}} < 0
\end{eqnarray}
Therefore, we can construct a control Lyapunov function candidate which satisfies $\frac{\partial\bar{V}}{\partial p} = 0$, or equivalently,
\begin{eqnarray}\label{eq18}
V\left( \mathbf{x} \right) = {\min\limits_{p}{\bar{V}\left( {\mathbf{x},p} \right)}}
\end{eqnarray}
if $V_{1} = \mathbf{\alpha}^{T}\mathbf{P}\mathbf{\alpha} > 0$. We have that
\begin{eqnarray*}
\bar{V}\left( {\mathbf{x},p} \right) = \left( {\mathbf{\alpha}p + \mathbf{\beta}} \right)^{T}\mathbf{P}\left( {\mathbf{\alpha}p + \mathbf{\beta}} \right) = \mathbf{\alpha}^{T}\mathbf{P}\mathbf{\alpha}p^{2} + 2\mathbf{\alpha}^{T}\mathbf{P}\mathbf{\beta}p + \mathbf{\beta}^{T}\mathbf{P}\mathbf{\beta} = V_{1}p^{2} + 2V_{2}p + V_{3}
\end{eqnarray*}
\begin{eqnarray*}
\frac{\partial\bar{V}\left( {\mathbf{x},p} \right)}{\partial p} = 2\mathbf{z}^{T}\mathbf{P}\mathbf{\alpha} = 2\mathbf{\alpha}^{T}\mathbf{P}\left( {\mathbf{\alpha}p + \mathbf{\beta}} \right)
\end{eqnarray*}
and the solution of (\ref{eq18}), denoted by $p_{0}\left( \mathbf{x} \right) = - \left( {\mathbf{\alpha}^{T}\mathbf{P}\mathbf{\alpha}} \right)^{- 1}\mathbf{\alpha}^{T}\mathbf{P}\mathbf{\beta}$, is unique, which yields
\begin{eqnarray}\label{eq19}
V\left( \mathbf{x} \right) = \left( {- \mathbf{\alpha}\left( {\mathbf{\alpha}^{T}\mathbf{P}\mathbf{\alpha}} \right)^{- 1}\mathbf{\alpha}^{T}\mathbf{P}\mathbf{\beta} + \mathbf{\beta}} \right)^{T}\mathbf{P}\left( {- \mathbf{\alpha}\left( {\mathbf{\alpha}^{T}\mathbf{P}\mathbf{\alpha}} \right)^{- 1}\mathbf{\alpha}^{T}\mathbf{P}\mathbf{\beta} + \mathbf{\beta}} \right) = - \frac{V_{2}^{2}}{V_{1}} + V_{3}
\end{eqnarray}
We then show that (\ref{eq19}) is an unconstrained control Lyapunov function for system (\ref{eq1}) by pointing out that $\mathcal{L}V\left( \mathbf{x} \right) = \frac{\partial V}{\partial\mathbf{x}}\mathbf{f} + \frac{1}{2}{\sum\limits_{i = 1}^{n}{\mathbf{\sigma}_{i}^{T}\frac{\partial^{2}V}{\partial\mathbf{x}^{2}}\mathbf{\sigma}_{i}}} < 0$ when $L_{\mathbf{g}}V\left( \mathbf{x} \right) = 0$.

We have that 
\begin{eqnarray*}
\frac{\partial\bar{V}}{\partial\mathbf{x}} = \begin{bmatrix}
{p^{2}\frac{\partial V_{1}}{\partial x_{1}} + 2p\frac{\partial V_{2}}{\partial x_{1}} + \frac{\partial V_{3}}{\partial x_{1}}} \\
 \vdots \\
{p^{2}\frac{\partial V_{1}}{\partial x_{n}} + 2p\frac{\partial V_{2}}{\partial x_{n}} + \frac{\partial V_{3}}{\partial x_{n}}} \\
\end{bmatrix}^{T},\frac{\partial V}{\partial\mathbf{x}} = \begin{bmatrix}
{\frac{V_{2}^{2}}{V_{1}^{2}}\frac{\partial V_{1}}{\partial x_{1}} - \frac{2V_{2}}{V_{1}}\frac{\partial V_{2}}{\partial x_{1}} + \frac{\partial V_{3}}{\partial x_{1}}} \\
 \vdots \\
{\frac{V_{2}^{2}}{V_{1}^{2}}\frac{\partial V_{1}}{\partial x_{n}} - \frac{2V_{2}}{V_{1}}\frac{\partial V_{2}}{\partial x_{n}} + \frac{\partial V_{3}}{\partial x_{n}}} \\
\end{bmatrix}^{T}
\end{eqnarray*}
\begin{eqnarray*}
\left( \frac{\partial^{2}\bar{V}}{\partial\mathbf{x}^{2}} \right)_{ij} = p^{2}\frac{\partial^{2}V_{1}}{\partial x_{i}\partial x_{j}} + 2p\frac{\partial^{2}V_{2}}{\partial x_{i}\partial x_{j}} + \frac{\partial^{2}V_{3}}{\partial x_{i}\partial x_{j}}
\end{eqnarray*}
\begin{eqnarray*}
\left( \frac{\partial^{2}V}{\partial\mathbf{x}^{2}} \right)_{ij} = \frac{V_{2}^{2}}{V_{1}^{2}}\frac{\partial^{2}V_{1}}{\partial x_{i}\partial x_{j}} - \frac{2V_{2}}{V_{1}}\frac{\partial^{2}V_{2}}{\partial x_{i}\partial x_{j}} + \frac{\partial^{2}V_{3}}{\partial x_{i}\partial x_{j}} - \frac{2}{V_{1}}\left( {\frac{V_{2}}{V_{1}}\frac{\partial V_{1}}{\partial x_{i}} - \frac{\partial V_{2}}{\partial x_{i}}} \right)\left( {\frac{V_{2}}{V_{1}}\frac{\partial V_{1}}{\partial x_{j}} - \frac{\partial V_{2}}{\partial x_{j}}} \right)
\end{eqnarray*}
It follows that
\begin{eqnarray*}
\frac{\partial V}{\partial\mathbf{x}} = \left. \frac{\partial\bar{V}}{\partial\mathbf{x}} \right|_{p = p_{0}{(\mathbf{x})}}
\end{eqnarray*}
\begin{eqnarray*}
\left( \frac{\partial^{2}V}{\partial\mathbf{x}^{2}} \right)_{ii} = \frac{V_{2}^{2}}{V_{1}^{2}}\frac{\partial^{2}V_{1}}{\partial x_{i}^{2}} - \frac{2V_{2}}{V_{1}}\frac{\partial^{2}V_{2}}{\partial x_{i}^{2}} + \frac{\partial^{2}V_{3}}{\partial x_{i}^{2}} - \frac{2}{V_{1}}\left( {\frac{V_{2}}{V_{1}}\frac{\partial V_{1}}{\partial x_{i}} - \frac{\partial V_{2}}{\partial x_{i}}} \right)^{2} \leq \frac{V_{2}^{2}}{V_{1}^{2}}\frac{\partial^{2}V_{1}}{\partial x_{i}^{2}} - \frac{2V_{2}}{V_{1}}\frac{\partial^{2}V_{2}}{\partial x_{i}^{2}} + \frac{\partial^{2}V_{3}}{\partial x_{i}^{2}} = \left. \left( \frac{\partial^{2}\bar{V}}{\partial\mathbf{x}^{2}} \right)_{ii} \right|_{p = p_{0}{(\mathbf{x})}}
\end{eqnarray*}
Since $L_{\bar{\mathbf{g}}}\bar{V}\left( \bar{\mathbf{x}} \right) = 0$ implies $L_{\mathbf{g}}V\left( \mathbf{x} \right) = 0$, (\ref{eq17}) will hold when $L_{\mathbf{g}}V\left( \mathbf{x} \right) = 0$. Therefore
\begin{eqnarray*}
\frac{\partial V}{\partial\mathbf{x}}\mathbf{f} + \frac{1}{2}{\sum\limits_{i = 1}^{n}{\mathbf{\sigma}_{i}^{T}\frac{\partial^{2}V}{\partial\mathbf{x}^{2}}\mathbf{\sigma}_{i}}} \leq \left. {\frac{\partial\bar{V}}{\partial\mathbf{x}}\mathbf{f} + \frac{1}{2}{\sum\limits_{i = 1}^{n}{\mathbf{\sigma}_{i}^{T}\frac{\partial^{2}\bar{V}}{\partial\mathbf{x}^{2}}\mathbf{\sigma}_{i}}}} \right|_{p = p_{0}{(\mathbf{x})}} < 0
\end{eqnarray*}
and hence (\ref{eq19}) is an unconstrained control Lyapunov function for system (\ref{eq1}).
\end{proof}

\subsection{Stochastic Control Barrier Function}
In the work of Romdlony et al\cite{Romdlony2016}, a lower-bounded CBF that fits the construction of CLBF is presented. However, such method is no longer valid for stochastic systems deal to the twice differentiable condition. Assume a CBF $B_{i}\left( \mathbf{x} \right)$ takes constant values in $X\backslash X_{i}$, where $X_i$ are compact and connected sets satisfying $D_{i} \subseteq X_{i}$ and $D_{relaxed} \subseteq X_{0} = {\bigcup\limits_{i = 1}^{n_{B}}X_{i}} \subseteq X$. Therefore (\ref{eq20}) and (\ref{eq21}) should be satisfied for the twice differentiability on the boundary of $X_i$. 
\begin{eqnarray}\label{eq20}
\frac{\partial B_{i}}{\partial\mathbf{x}} = 0,\forall\mathbf{x} \in \partial X_{i}
\end{eqnarray}
\begin{eqnarray}\label{eq21}
\frac{\partial^{2}B_{i}}{\partial\mathbf{x}^{2}} = 0,\forall\mathbf{x} \in \partial X_{i}
\end{eqnarray}
We assume there exists a function $
F_{Bi}\left( \mathbf{x} \right) \geq 0$ for every $D_i$ such that $D_i$ and $X_i$ can be rewritten as
\begin{eqnarray*}
D_{i} = \left\{ \mathbf{x} \middle| F_{Bi}\left( \mathbf{x} \right) - l_{Di} < 0 \right\}
\end{eqnarray*}
\begin{eqnarray*}
X_{i} = \left\{ \mathbf{x} \middle| F_{Bi}\left( \mathbf{x} \right) - l_{Xi} < 0 \right\}
\end{eqnarray*}
Consider the following function
\begin{eqnarray}\label{eq24}
B_{i}\left( \mathbf{x} \right) = \left\{ \begin{matrix}
{B_{imin} + \frac{B_{imax} - B_{imin}}{1 + e^{- \frac{k_{Bi}{({l_{Di} - F_{Bi}})}}{F_{Bi}{({l_{Xi} - F_{Bi}})}}}~},\forall x \in X_{i}} \\
{B_{imin},\forall x \in X\backslash X_{i}} \\
\end{matrix} \right.
\end{eqnarray}
where $k_{Bi}\left( \mathbf{x} \right) > 0$ is a parameter that adjusts the shape of CBF, and we will detail its selection through an example in Section \ref{sec5}.
We have
\begin{eqnarray*}
\frac{\partial B_{i}}{\partial\mathbf{x}} = - \frac{B_{imax} - B_{imin}}{\left( {1 + e_{1i}~} \right)^{2}}e_{1i}\left\lbrack {\frac{k_{Bi}e_{2i}}{e_{3i}}\frac{\partial F_{Bi}}{\partial\mathbf{x}} - e_{4i}\frac{\partial k_{Bi}}{\partial\mathbf{x}}} \right\rbrack
\end{eqnarray*}
\begin{align*}
\frac{\partial^{2}B_{i}}{\partial\mathbf{x}^{2}} & = \left\{ {e_{5i}\left( \frac{k_{Bi}e_{2i}}{e_{3i}} \right)^{2} - \frac{B_{imax} - B_{imin}}{\left( {1 + e_{1i}~} \right)^{2}}e_{1i}\frac{k_{Bi}}{e_{3i}^{2}}\left\lbrack {2e_{3i}\left( {F_{Bi} - l_{Di}} \right) - 2e_{2i}F_{Bi}\left( {F_{Bi} - l_{Xi}} \right)\left( {2F_{Bi} - l_{Xi}} \right)} \right\rbrack} \right\}\frac{\partial F_{Bi}}{\partial\mathbf{x}}^{T}\frac{\partial F_{Bi}}{\partial\mathbf{x}} \\
& - \left\lbrack {e_{5i}\frac{2k_{Bi}e_{2i}e_{4i}}{e_{3i}} + \frac{B_{imax} - B_{imin}}{\left( {1 + e_{1i}~} \right)^{2}}e_{1i}\frac{e_{2i}}{e_{3i}}} \right\rbrack\frac{\partial F_{Bi}}{\partial\mathbf{x}}^{T}\frac{\partial k_{Bi}}{\partial\mathbf{x}} + e_{5i}e_{4i}^{2}\frac{\partial k_{Bi}}{\partial\mathbf{x}}^{T}\frac{\partial k_{Bi}}{\partial\mathbf{x}} - \frac{B_{imax} - B_{imin}}{\left( {1 + e_{1i}~} \right)^{2}}e_{1i}\frac{k_{Bi}}{e_{3i}}e_{2i}\frac{\partial^{2}F_{Bi}}{\partial\mathbf{x}^{2}} \\
& + \frac{B_{imax} - B_{imin}}{\left( {1 + e_{1i}~} \right)^{2}}e_{1i}e_{4i}\frac{\partial^{2}k_{Bi}}{\partial\mathbf{x}^{2}}
\end{align*}
where 
\begin{eqnarray*}
e_{1i} = e^{- \frac{k_{Bi}{({l_{Di} - F_{Bi}})}}{F_{Bi}{({l_{Xi} - F_{Bi}})}}}, e_{2i} = F_{Bi}^{2} - 2l_{Di}F_{Bi} + l_{Di}l_{Xi}, e_{3i} = F_{Bi}^{2}\left( {l_{Xi} - F_{Bi}} \right)^{2}
\end{eqnarray*}
\begin{eqnarray*}
e_{4i} = \frac{l_{Di} - F_{Bi}}{F_{Bi}\left( {l_{Xi} - F_{Bi}} \right)}, e_{5i} = \frac{2\left( {B_{imax} - B_{imin}} \right)}{\left( {1 + e_{1i}~} \right)^{3}}\left( e_{1i} \right)^{2} - \frac{B_{imax} - B_{imin}}{\left( {1 + e_{1i}~} \right)^{2}}e_{1i}
\end{eqnarray*}
If both $k_{Bi}$ and $F_{Bi}$ are twice differentiable on the boundary of $X_i$, (\ref{eq20}) and (\ref{eq21}) hold. Moreover,
\begin{eqnarray*}
B_{i}\left( \mathbf{x} \right) = \left\{ \begin{matrix}
{B_{imax},F_{Bi}\left( \mathbf{x} \right) = 0} \\
{\frac{B_{imax} + B_{imin}}{2},F_{Bi}\left( \mathbf{x} \right) = l_{Di}} \\
{B_{imin},F_{Bi}\left( \mathbf{x} \right) = l_{Xi}} \\
\end{matrix} \right.
\end{eqnarray*}
Then (\ref{eq24}) is a stochastic control barrier function if $B_{imin}<0$,$B_{imax}>0$,$B_{imax}+B_{imin}>0$.

\subsection{Independent Design of Stochastic CLF and CBFs}
In general, we prefer to design control Lyapunov function and control barrier functions independently and then combine them to form a control Lyapunov-barrier function. Hence control barrier function is required to reach its lower bound and become a constant value when leaving away from the neighborhood of unsafe sets. Proposition \ref{pro3} provides a method to construct a stochastic CLBF by unifying CBFs and an unconstrained CLF. Note that (\ref{eq4}) in Definition \ref{def1} implies the existence of a candidate control input, therefore it is replaced by (\ref{eq10}) and we only need to consider an unconstrained CLF when constructing CLBF. The existence of a candidate control input (feasibility) will be discussed in Section \ref{sec4}. 
\begin{proposition}\label{pro3}
For a given unsafe region $D = {\bigcup\limits_{i = 1}^{n_{B}}D_{i}} \subseteq X$, suppose there exists an unconstrained stochastic control Lyapunov function $V:X\rightarrow R$, which satisfies $\mathcal{L}V\left( \mathbf{x} \right) < 0$ when $L_{\mathbf{g}}V\left( \mathbf{x} \right) = 0$ instead of (\ref{eq4}), and stochastic control barrier functions $\left. B_{i}:X\rightarrow R,i = 1,2,\ldots,n_{B} \right.$ of system (\ref{eq1}), which satisfy
\begin{eqnarray}
c_{1}\left\| \mathbf{x} \right\|^{2} \leq V\left( \mathbf{x} \right) \leq c_{2}\left\| \mathbf{x} \right\|^{2},\forall\mathbf{x} \in R^{n},c_{2} > c_{1} > 0
\end{eqnarray}
\begin{eqnarray}
B_{i}\left( \mathbf{x} \right) = - \eta_{i} < 0,\forall\mathbf{x} \in X\backslash X_{i},B_{i}\left( \mathbf{x} \right) \geq - \eta_{i},\forall\mathbf{x} \in X_{i}
\end{eqnarray}
where $X_i$ are compact and connected sets satisfying $D_i\subseteq X_i$ and $D_{relaxed}\subseteq X_0=\bigcup_{i=1}^{n_B}X_i\subseteq X$. Then the following function $W_{c}\left( \mathbf{x} \right)$ is a stochastic control Lyapunov-barrier function if (\ref{eq10}) holds. 

\begin{eqnarray*}
W_{c}\left( \mathbf{x} \right) = V\left( \mathbf{x} \right) + {\sum\limits_{i = 1}^{n_{B}}{\lambda_{i}B_{i}\left( \mathbf{x} \right)}} + \kappa
\end{eqnarray*}
where
\begin{eqnarray*}
\lambda_{i} > \frac{c_{2}c_{3i} - c_{1}c_{4i}}{\eta_{i}}
\end{eqnarray*}
\begin{eqnarray*}
{\sum\limits_{j = 1,j \neq i}^{n_{B}}{\lambda_{j}\eta_{j}}} - c_{1}c_{4i} < \kappa < {\sum\limits_{j = 1}^{n_{B}}{\lambda_{j}\eta_{j}}} - c_{2}c_{3i}
\end{eqnarray*}
\begin{eqnarray*}
c_{3i} = \begin{matrix}
{\max} \\
{x \in \partial X_{i}} \\
\end{matrix}\left\| \mathbf{x} \right\|^{2},c_{4i} = \begin{matrix}
{\min} \\
{x \in D_{i}} \\
\end{matrix}\left\| \mathbf{x} \right\|^{2}
\end{eqnarray*}
\begin{proof}
Since $V\left( \mathbf{x} \right)$ and $B_{i}\left( \mathbf{x} \right)$ are twice differentiable and have a minimum at the origin respectively, $W_{c}\left( \mathbf{x} \right)$ is also a twice differentiable function with a minimum at the origin.
For $\mathbf{x} \in D_{i}$,
\begin{eqnarray*}
W_{c}\left( \mathbf{x} \right) = V\left( \mathbf{x} \right) + \lambda_{i}B_{i}\left( \mathbf{x} \right) - {\sum\limits_{j = 1,j \neq i}^{n_{B}}{\lambda_{j}\eta_{j}}} + \kappa \geq c_{1}\left\| \mathbf{x} \right\|^{2} - {\sum\limits_{j = 1,j \neq i}^{n_{B}}{\lambda_{j}\eta_{j}}} + \kappa \geq c_{1}c_{4i} - {\sum\limits_{j = 1,j \neq i}^{n_{B}}{\lambda_{j}\eta_{j}}} + \kappa > 0
\end{eqnarray*}
we have that (\ref{eq9}) holds.
For $\mathbf{x} \in \partial X_{i}$,
\begin{eqnarray*}
W_{c}\left( \mathbf{x} \right) \leq c_{2}\left\| \mathbf{x} \right\|^{2} - {\sum\limits_{i = 1}^{n_{B}}{\lambda_{i}\eta_{i}}} + \kappa \leq c_{2}c_{3i} - {\sum\limits_{i = 1}^{n_{B}}{\lambda_{i}\eta_{i}}} + \kappa < 0
\end{eqnarray*}
and hence (\ref{eq11}) holds. If (\ref{eq10}) holds, $W_{c}\left( \mathbf{x} \right)$ is a stochastic control Lyapunov-barrier function by Definition \ref{def3}.
\end{proof}
\end{proposition}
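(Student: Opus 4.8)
The plan is to verify $W_c$ against Definition \ref{def3} clause by clause. Since the statement grants (\ref{eq10}) as a hypothesis, the only things left to check are that $W_c$ is twice differentiable with a minimum at the origin, that $W_c>0$ on $D$ (condition (\ref{eq9})), and that the sublevel set $\{\mathbf{x}\in X: W_c(\mathbf{x})\le 0\}$ is nonempty (condition (\ref{eq11})). Everything will come out of the quadratic sandwich on $V$, the two-valued behaviour of the barriers ($B_i=-\eta_i$ off $X_i$, $B_i\ge-\eta_i$ on $X_i$, and $B_i>0$ on $D_i$ from the CBF definition), and the prescribed ranges of $\lambda_i$ and $\kappa$.

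First I would dispose of the structural part: $W_c=V+\sum_i\lambda_iB_i+\kappa$ is a finite linear combination of twice differentiable functions plus a constant, hence twice differentiable; $V$ is positive definite, so it has a strict minimum at $0$; and each $B_i$ attains its global minimum value $-\eta_i$ on all of $X\setminus X_i$, a set that contains a neighbourhood of the origin since $0\notin D_{relaxed}\supseteq D_i$ forces $0$ out of the localizing shell $X_i$. Hence $0$ minimizes every summand, and therefore $W_c$.

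Next, condition (\ref{eq9}). Fix $\mathbf{x}\in D$. Because the $D_i$ are pairwise disjoint and each $X_j$ is localized around $D_j$ (so $D_i\cap X_j=\varnothing$ for $j\ne i$), $\mathbf{x}$ lies in exactly one $D_i$, where $B_i(\mathbf{x})>0$ while $B_j(\mathbf{x})=-\eta_j$ for $j\ne i$. Using $\lambda_i>0$ and $V(\mathbf{x})\ge c_1\|\mathbf{x}\|^2\ge c_1c_{4i}$,
\begin{eqnarray*}
W_c(\mathbf{x}) > V(\mathbf{x}) - \sum_{j\ne i}\lambda_j\eta_j + \kappa \ge c_1c_{4i} - \sum_{j\ne i}\lambda_j\eta_j + \kappa,
\end{eqnarray*}
and the prescribed lower bound on $\kappa$ is exactly what makes the right-hand side positive. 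For (\ref{eq11}) I would exhibit an explicit point of the sublevel set: take any $\mathbf{x}\in\partial X_i$. There the template (\ref{eq24}) gives $B_i(\mathbf{x})=-\eta_i$, and $B_j(\mathbf{x})=-\eta_j$ for $j\ne i$, so with $V(\mathbf{x})\le c_2\|\mathbf{x}\|^2\le c_2c_{3i}$,
\begin{eqnarray*}
W_c(\mathbf{x}) \le c_2c_{3i} - \sum_{j=1}^{n_B}\lambda_j\eta_j + \kappa < 0
\end{eqnarray*}
by the upper bound on $\kappa$; hence $\{\mathbf{x}\in X: W_c(\mathbf{x})\le 0\}\neq\varnothing$. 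Definition \ref{def3} then applies, with (\ref{eq10}) supplied by assumption, finishing the proof.

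The one place the parameter choices actually do work — and the main, though mild, obstacle — is consistency: the interval allotted to $\kappa$ must be nonempty, i.e. $\sum_{j\ne i}\lambda_j\eta_j - c_1c_{4i} < \sum_{j=1}^{n_B}\lambda_j\eta_j - c_2c_{3i}$, which collapses to $\lambda_i\eta_i > c_2c_{3i}-c_1c_{4i}$, precisely the stated lower bound on $\lambda_i$. When $n_B>1$ one must also make the $n_B$ intervals for $\kappa$ overlap; this is obtained by enlarging the $\lambda_i$ until $\min_i(\lambda_i\eta_i+c_1c_{4i})>\max_j c_2c_{3j}$. Apart from this bookkeeping and the localization hypotheses on the $X_i$, the argument is just the two one-line estimates displayed above.
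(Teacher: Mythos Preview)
Your proof is correct and follows essentially the same route as the paper's: verify the smoothness and minimum-at-the-origin property from the building blocks, then establish (\ref{eq9}) by bounding $W_c$ from below on each $D_i$ via $B_i>0$, $B_j=-\eta_j$ ($j\ne i$), $V\ge c_1c_{4i}$ and the lower bound on $\kappa$, and establish (\ref{eq11}) by evaluating on $\partial X_i$ using $B_j=-\eta_j$ for all $j$, $V\le c_2c_{3i}$ and the upper bound on $\kappa$. Your additional remarks on the nonemptiness of the $\kappa$-interval and on the implicit localization hypotheses ($0\notin X_i$, $D_i\cap X_j=\varnothing$ for $j\ne i$) are useful clarifications that the paper leaves tacit; note only that $0\notin D_{relaxed}$ by itself does not force $0\notin X_i$, so that step rests directly on the assumed placement of the $X_i$ away from the origin rather than on the chain you wrote.
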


\section{Control Lyapunov-Barrier Function Based Stochastic Model Predictive Control}\label{sec4}
We now apply the proposed stochastic CLBF to a model predictive control setting. Consider a stochastic model predictive control problem
\begin{gather}
\mathbf{u}^{*}=
\begin{matrix}
{\arg \min} \\
{\mathbf{u}} \\
\end{matrix}
\left\lbrack {\sum\limits_{i = 0}^{N - 1}{\mathbf{x}^{T}\left( i \middle| k \right)\mathbf{Q}\mathbf{x}\left( i \middle| k \right) + \mathbf{u}^{T}\left( i \middle| k \right)\mathbf{R}\mathbf{u}\left( i \middle| k \right)}
+ R_{2}\delta^{2}\left( i \middle| k \right)} \right\rbrack
\\
s.t.\quad\mathbf{x}\left( {i + 1} \middle| k \right) - \mathbf{x}\left( i \middle| k \right) = \left\lbrack {\mathbf{f}\left( {\mathbf{x}\left( i \middle| k \right)} \right) + \mathbf{g}\left( {\mathbf{x}\left( i \middle| k \right)} \right)\mathbf{u}\left( i \middle| k \right)} \right\rbrack T
\\
\mathbf{x}\left( 0 \middle| k \right) = \mathbf{x}(k)
\\ \label{eq28}
\mathbf{u}\left( i \middle| k \right) \in U
\\ \label{eq29}
\mathcal{L}W_{c}\left( {\mathbf{x}\left( i \middle| k \right),\mathbf{u}\left( i \middle| k \right)} \right) \leq \mathcal{L}W_{c}\left( {\mathbf{x}\left( i \middle| k \right),\mathbf{\phi}\left( \mathbf{x}\left( i \middle| k \right) \right)} \right)+\delta\left( i \middle| k \right),~\forall\mathbf{x} \in X\backslash\left( D \cup \left\{ 0 \right\} \right)
\end{gather}
where $N$ is predictive horizon; $\mathbf{Q}$ and $\mathbf{R}$ are weighting matrices of predicted states and inputs respectively; $R_2$ is the penalty for feasibility recovery; $\mathbf{x}\left( i \middle| k \right)$ denotes the predicted state of nominal system at $t=t_k+iT$; $\mathbf{u}\left( i \middle| k \right)$ denotes predicted input during $t\in[t_k+iT,t_k+iT+T)$. The control input of closed-loop system is $\mathbf{u}(t) = \mathbf{u}\left( 0 \middle| k \right),t \in \lbrack t_k,t_{k+1})$.
$\mathbf{\phi}\left( \mathbf{x}\left( i \middle| k \right) \right)$ is the auxiliary Lyapunov controller and influences performance during sampling intervals. 
We add $\delta\left( i \middle| k \right)$ to recover feasibility when auxiliary Lyapunov controller $\mathbf{\phi}\left( \mathbf{x}\left( i \middle| k \right) \right)$ is not a feasible solution.

The constraints in the optimizations of MPC are (\ref{eq28}) and (\ref{eq29}). Note that constraints in (\ref{eq29}) are associated with admissible control $\mathbf{u}(t) \in U$, thus the region that admits a feasible solution is restricted. In Lyapunov based control, a universal formula\cite{Lin1991} is applied to construct a bounded control law as an auxiliary controller to illustrate feasibility and stability\cite{Wang2020}\cite{Mhaskar2006}. We extend it to a stochastic framework, namely,
\begin{eqnarray}
\mathbf{\phi}\left( \mathbf{x} \right) = \left\{ \begin{matrix}
{- \frac{a + \sqrt{a^{2} + \left\| {u^{max}\mathbf{b}} \right\|^{4}}}{\left\| \mathbf{b} \right\|^{2}\left\lbrack {1 + \sqrt{1 + \left\| {u^{max}\mathbf{b}} \right\|^{2}}} \right\rbrack}\mathbf{b}^{T},\mathbf{b} \neq 0} \\
{0,\mathbf{b} = 0} \\
\end{matrix} \right.
\end{eqnarray}
where $a = L_{f}W_{c}\left( \mathbf{x} \right) + \frac{1}{2}\mathrm{tr}\left( {\mathbf{\sigma}^{T}\left( \mathbf{x} \right)\frac{\partial^{2}W_{c}}{\partial\mathbf{x}^{2}}\mathbf{\sigma}\left( \mathbf{x} \right)} \right) + \rho V_{c}\left( \mathbf{x} \right)$, $\mathbf{b} = L_{g}W_{c}\left( \mathbf{x} \right)$, $\left\| \mathbf{u} \right\| \leq u^{max}$, $\rho > 0$.
\begin{proposition}\label{pro4}
Let $X_{\phi} = \left\{ {\left. \mathbf{x} \right|a \leq \left\| {u^{max}\mathbf{b}} \right\|} \right\}$. For $\forall\mathbf{x} \in X_{\phi},\mathbf{x} \neq 0$, $\mathbf{\phi}\left( \mathbf{x} \right) \in U$ and $\mathcal{L}W_{c}\left( {\mathbf{x},\mathbf{\phi}\left( \mathbf{x} \right)} \right) \leq - \rho V_{c}\left( \mathbf{x} \right) < 0$.
\end{proposition}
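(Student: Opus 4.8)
The plan is to verify the two assertions --- admissibility $\phi(\mathbf{x})\in U$ and strict decrease $\mathcal{L}W_c(\mathbf{x},\phi(\mathbf{x}))<0$ --- by direct substitution of the universal formula, splitting into the cases $\mathbf{b}=0$ and $\mathbf{b}\neq 0$ and reducing each to an elementary scalar inequality drawn from the defining constraint $a\leq\|u^{max}\mathbf{b}\|$ of $X_\phi$. First I would record, from the generator formula in Definition \ref{def1} together with the definitions of $a$ and $\mathbf{b}$, the identity
\[
\mathcal{L}W_c(\mathbf{x},\mathbf{u})=L_fW_c(\mathbf{x})+\frac{1}{2}\mathrm{tr}\!\left(\sigma^T\frac{\partial^2 W_c}{\partial\mathbf{x}^2}\sigma\right)+L_gW_c(\mathbf{x})\mathbf{u}=a(\mathbf{x})-\rho V_c(\mathbf{x})+\mathbf{b}(\mathbf{x})\mathbf{u}.
\]
Since $V_c$ is positive definite, $\rho V_c(\mathbf{x})>0$ for all $\mathbf{x}\neq 0$, so for the decrease condition it suffices to prove $a(\mathbf{x})+\mathbf{b}(\mathbf{x})\phi(\mathbf{x})\leq 0$ on $X_\phi\backslash\{0\}$.

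For $\mathbf{b}=0$ the formula gives $\phi(\mathbf{x})=0\in U$, while $\mathbf{x}\in X_\phi$ forces $a(\mathbf{x})\leq 0$; hence $a+\mathbf{b}\phi=a\leq 0$ and both claims hold. For $\mathbf{b}\neq 0$ I would set $q:=\|u^{max}\mathbf{b}\|^2=(u^{max})^2\|\mathbf{b}\|^2>0$; then $\mathbf{b}\mathbf{b}^T=\|\mathbf{b}\|^2$ and $a+\sqrt{a^2+q^2}\geq 0$ give
\[
\mathbf{b}(\mathbf{x})\phi(\mathbf{x})=-\frac{a+\sqrt{a^2+q^2}}{1+\sqrt{1+q}},\qquad\|\phi(\mathbf{x})\|=\frac{a+\sqrt{a^2+q^2}}{\|\mathbf{b}\|(1+\sqrt{1+q})}.
\]

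For the decrease condition it then remains to show $a(1+\sqrt{1+q})\leq a+\sqrt{a^2+q^2}$, i.e.\ $a\sqrt{1+q}\leq\sqrt{a^2+q^2}$: this is immediate when $a\leq 0$, and for $a>0$ squaring turns it into $a^2q\leq q^2$, i.e.\ $a^2\leq q$, which is exactly $a\leq\|u^{max}\mathbf{b}\|$. For admissibility, noting $u^{max}\|\mathbf{b}\|=\sqrt{q}$, the bound $\|\phi(\mathbf{x})\|\leq u^{max}$ is equivalent to $a+\sqrt{a^2+q^2}\leq\sqrt{q}+\sqrt{q+q^2}$, i.e.\ $\sqrt{a^2+q^2}\leq(\sqrt{q}-a)+\sqrt{q+q^2}$; on $X_\phi$ we have $a\leq\sqrt{q}$, so both sides are nonnegative and squaring collapses the claim to $0\leq 2(\sqrt{q}-a)(\sqrt{q}+\sqrt{q+q^2})$, which again follows from $a\leq\sqrt{q}$. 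The same computation shows the denominator of $\phi$ never vanishes on $X_\phi\backslash\{0\}$, so $\phi$ is well defined there, and that it is enough for $U$ to contain the ball $\{\mathbf{u}:\|\mathbf{u}\|\leq u^{max}\}$.

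I expect the admissibility estimate --- the bound $\|\phi(\mathbf{x})\|\leq u^{max}$ --- to be the one place requiring genuine care, since it is precisely the property distinguishing this bounded universal formula from Sontag's classical unbounded one; once it is established, the strict decrease follows almost for free from the same hypothesis $a\leq\|u^{max}\mathbf{b}\|$ and the positive definiteness of $V_c$. I would also note that the entire argument is pointwise, so no regularity of $\phi$ is needed here --- continuity of $\phi$ becomes relevant only later, where $\phi$ is used as the auxiliary controller certifying recursive feasibility of the MPC.
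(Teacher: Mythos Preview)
Your proof is correct and follows essentially the same route as the paper's: split on $\mathbf{b}=0$ versus $\mathbf{b}\neq 0$, substitute the universal formula directly, and reduce both admissibility and strict decrease to the scalar constraint $a\leq\|u^{max}\mathbf{b}\|$. The paper organizes the $\mathbf{b}\neq 0$ case by a further split on the sign of $a$, whereas your substitution $q=\|u^{max}\mathbf{b}\|^2$ and the reduction to $a+\mathbf{b}\phi\leq 0$ handle both signs in one line; this is a cosmetic streamlining rather than a different argument.
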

\begin{proof}
The conclusion is true if $\mathbf{b} = 0$. We now prove it is true when $\mathbf{b} \neq 0$.
\begin{enumerate}[a.]
\item $\mathbf{\phi}\left( \mathbf{x} \right) \in U$
\begin{enumerate}[1)]
\item $0 \leq a \leq \left\| {u^{max}\mathbf{b}} \right\|$
\begin{eqnarray*}
\left| {a + \sqrt{a^{2} + \left\| {u^{max}\mathbf{b}} \right\|^{4}}} \right| \leq |a| + \left| \sqrt{a^{2} + \left\| {u^{max}\mathbf{b}} \right\|^{4}} \right| \leq \left\| {u^{max}\mathbf{b}} \right\| + \sqrt{\left\| {u^{max}\mathbf{b}} \right\|^{2} + \left\| {u^{max}\mathbf{b}} \right\|^{4}}
\end{eqnarray*}
It follows that
\begin{eqnarray*}
\left| {- \frac{a + \sqrt{a^{2} + \left\| {u^{max}\mathbf{b}} \right\|^{4}}}{\left\| \mathbf{b} \right\|^{2}\left\lbrack 1 + \sqrt{1 + \left\| {u^{max}\mathbf{b}} \right\|^{2}} \right\rbrack}} \right| \leq \frac{\left\| {u^{max}\mathbf{b}} \right\| + \sqrt{\left\| {u^{max}\mathbf{b}} \right\|^{2} + \left\| {u^{max}\mathbf{b}} \right\|^{4}}}{\left\| \mathbf{b} \right\|^{2}\left\lbrack 1 + \sqrt{1 + \left\| {u^{max}\mathbf{b}} \right\|^{2}} \right\rbrack} = \frac{\left\| {u^{max}\mathbf{b}} \right\|}{\left\| \mathbf{b} \right\|^{2}}
\end{eqnarray*}
We have
\begin{eqnarray*}
\left\| {\mathbf{\phi}\left( \mathbf{x} \right)} \right\| \leq \frac{\left\| {u^{max}\mathbf{b}} \right\|}{\left\| \mathbf{b} \right\|^{2}}\left\| \mathbf{b}^{T} \right\| = u^{max}
\end{eqnarray*}
which means
\begin{eqnarray*}
\mathbf{\phi}\left( \mathbf{x} \right) \in U
\end{eqnarray*}
\item $a < 0 \leq \left\| {u^{max}\mathbf{b}} \right\|$
\begin{eqnarray*}
a^{2} + \left\| {u^{max}\mathbf{b}} \right\|^{4} \leq \left( {\left\| {u^{max}\mathbf{b}} \right\|^{2} - a} \right)^{2} = a^{2} + \left\| {u^{max}\mathbf{b}} \right\|^{4} - 2a\left\| {u^{max}\mathbf{b}} \right\|^{2}
\end{eqnarray*}
thus
\begin{eqnarray*}
a + \sqrt{a^{2} + \left\| {u^{max}\mathbf{b}} \right\|^{4}} \leq \left\| {u^{max}\mathbf{b}} \right\|^{2}
\end{eqnarray*}
It follows that
\begin{eqnarray*}
\left| {- \frac{a + \sqrt{a^{2} + \left\| {u^{max}\mathbf{b}} \right\|^{4}}}{\left\| \mathbf{b} \right\|^{2}\left\lbrack 1 + \sqrt{1 + \left\| {u^{max}\mathbf{b}} \right\|^{2}} \right\rbrack}} \right| = \frac{a + \sqrt{a^{2} + \left\| {u^{max}\mathbf{b}} \right\|^{4}}}{\left\| \mathbf{b} \right\|^{2}\left\lbrack 1 + \sqrt{1 + \left\| {u^{max}\mathbf{b}} \right\|^{2}} \right\rbrack} \leq \frac{\left\| {u^{max}\mathbf{b}} \right\|^{2}}{\left\| \mathbf{b} \right\|^{2}\left\lbrack 1 + \sqrt{1 + \left\| {u^{max}\mathbf{b}} \right\|^{2}} \right\rbrack} \leq \frac{\left\| {u^{max}\mathbf{b}} \right\|}{\left\| \mathbf{b} \right\|^{2}}
\end{eqnarray*}
We have
\begin{eqnarray*}
\left\| {\mathbf{\phi}\left( \mathbf{x} \right)} \right\| \leq \frac{\left\| {u^{max}\mathbf{b}} \right\|}{\left\| \mathbf{b} \right\|^{2}}\left\| \mathbf{b}^{T} \right\| = u^{max}
\end{eqnarray*}
which means
\begin{eqnarray*}
\mathbf{\phi}\left( \mathbf{x} \right) \in U
\end{eqnarray*}
\end{enumerate}
\item $LW_{c}\left( {\mathbf{x},\mathbf{\phi}\left( \mathbf{x} \right)} \right) < 0$
\begin{enumerate}[1)]
\item $0 \leq a \leq \left\| {u^{max}\mathbf{b}} \right\|$
\begin{align*}
\mathcal{L}W_{c}\left( {\mathbf{x},\mathbf{\phi}\left( \mathbf{x} \right)} \right) & = L_{f}W_{c}\left( \mathbf{x} \right) + \frac{1}{2}\mathrm{tr}\left( {\mathbf{\sigma}^{T}\left( \mathbf{x} \right)\frac{\partial^{2}W_{c}}{\partial\mathbf{x}^{2}}\mathbf{\sigma}\left( \mathbf{x} \right)} \right) + L_{g}W_{c}\left( \mathbf{x} \right)\mathbf{\phi}\left( \mathbf{x} \right) = a - \rho V_{c}\left( \mathbf{x} \right) + \mathbf{b}\mathbf{\phi}\left( \mathbf{x} \right) \\
& = - \rho V_{c}\left( \mathbf{x} \right) + \frac{a\left\lbrack {1 + \sqrt{1 + \left\| {u^{max}\mathbf{b}} \right\|^{2}}} \right\rbrack - a - \sqrt{a^{2} + \left\| {u^{max}\mathbf{b}} \right\|^{4}}}{\left\lbrack {1 + \sqrt{1 + \left\| {u^{max}\mathbf{b}} \right\|^{2}}} \right\rbrack} \\
& = - \rho V_{c}\left( \mathbf{x} \right) + \frac{\sqrt{a^{2} + \left\| {au^{max}\mathbf{b}} \right\|^{2}} - \sqrt{a^{2} + \left\| {u^{max}\mathbf{b}} \right\|^{4}}}{\left\lbrack {1 + \sqrt{1 + \left\| {u^{max}\mathbf{b}} \right\|^{2}}} \right\rbrack} \leq - \rho V_{c}\left( \mathbf{x} \right) < 0
\end{align*}
\item $a < 0 \leq \left\| {u^{max}\mathbf{b}} \right\|$
\begin{align*}
\mathcal{L}W_{c}\left( {\mathbf{x},\mathbf{\phi}\left( \mathbf{x} \right)} \right)  = - \rho V_{c}\left( \mathbf{x} \right) + \frac{a\sqrt{1 + \left\| {u^{max}\mathbf{b}} \right\|^{2}} - \sqrt{a^{2} + \left\| {u^{max}\mathbf{b}} \right\|^{4}}}{\left\lbrack {1 + \sqrt{1 + \left\| {u^{max}\mathbf{b}} \right\|^{2}}} \right\rbrack} < - \rho V_{c}\left( \mathbf{x} \right) < 0
\end{align*}
\end{enumerate}
\end{enumerate}
Therefore, $\mathbf{\phi}\left( \mathbf{x} \right) \in U$ and $\mathcal{L}W_{c}\left( {\mathbf{x},\mathbf{\phi}\left( \mathbf{x} \right)} \right) \leq - \rho V_{c}\left( \mathbf{x} \right) < 0,\forall\mathbf{x} \in X_{\phi},\mathbf{x} \neq 0$.
\end{proof}
Input constraints in Proposition \ref{pro4} take the form of $\left\| \mathbf{u} \right\| \leq u^{max}$ and includes zero input. We now consider the form $\mathbf{u}_{\mathbf{m}\mathbf{i}\mathbf{n}} \leq \mathbf{u} \leq \mathbf{u}_{\mathbf{m}\mathbf{a}\mathbf{x}}$. Let $\mathbf{u} = \mathbf{u}_{\mathbf{m}\mathbf{e}\mathbf{a}\mathbf{n}} + \mathbf{u}_{\mathbf{d}}\mathbf{K}$, where $\mathbf{u}_{\mathbf{m}\mathbf{e}\mathbf{a}\mathbf{n}} = \left( {\frac{u_{max1} + u_{min1}}{2},\ldots,\frac{u_{maxm} + u_{minm}}{2}} \right)$, $
\mathbf{u}_{\mathbf{d}} = diag\left( {\frac{u_{max1} - u_{min1}}{2},\ldots,\frac{u_{maxm} - u_{minm}}{2}} \right)$ and $\mathbf{b} = L_{g}W_{c}\left( \mathbf{x} \right)\mathbf{u}_{\mathbf{d}}$. 
Substitute $a = L_{f}W_{c}\left( \mathbf{x} \right) + \frac{1}{2}\mathrm{tr}\left( {\mathbf{\sigma}^{T}\left( \mathbf{x} \right)\frac{\partial^{2}W_{c}}{\partial\mathbf{x}^{2}}\mathbf{\sigma}\left( \mathbf{x} \right)} \right) + L_{g}W_{c}\left( \mathbf{x} \right)\mathbf{u}_{\mathbf{m}\mathbf{e}\mathbf{a}\mathbf{n}} + \rho V_{c}\left( \mathbf{x} \right)$ and $\mathbf{b} = L_{g}W_{c}\left( \mathbf{x} \right)\mathbf{u}_{\mathbf{d}}$ into Proposition \ref{pro4}, then following bounded control law satisfies $\mathbf{\phi}\left( \mathbf{x} \right) \in U$ and $\mathcal{L}W_{c}\left( {\mathbf{x},\mathbf{\phi}\left( \mathbf{x} \right)} \right) < 0$, $\forall\mathbf{x} \in X_{\phi} = \left\{ {\left. \mathbf{x} \right|a \leq \left\| \mathbf{b} \right\|} \right\},\mathbf{x} \neq 0$.
\begin{eqnarray*}
\mathbf{\phi}\left( \mathbf{x} \right) = \left\{ \begin{matrix}
{\mathbf{u}_{\mathbf{m}\mathbf{e}\mathbf{a}\mathbf{n}} - \frac{a + \sqrt{a^{2} + \left\| \mathbf{b} \right\|^{4}}}{\left\| \mathbf{b} \right\|^{2}\left\lbrack {1 + \sqrt{1 + \left\| \mathbf{b} \right\|^{2}}} \right\rbrack}\mathbf{u}_{\mathbf{d}}\mathbf{b}^{T},b \neq 0} \\
{\mathbf{u}_{\mathbf{m}\mathbf{e}\mathbf{a}\mathbf{n}},b = 0} \\
\end{matrix} \right.
\end{eqnarray*}
To enlarge feasibility region and reduce constraint violations near boundary of unsafe region, we should illustrate approximation of feasibility region in control design. With $X$ in (\ref{eq29}) being replaced by $X_{\phi} = \left\{ {\left. \mathbf{x} \right|a \leq \left\| \mathbf{b} \right\|} \right\}$, the stochastic model predictive control problem is feasible $\forall\mathbf{x} \in X_{\phi}\backslash D$. The feasible region $X_\phi$ is related to a specific candidate control if we take $\mathbf{\phi}\left( \mathbf{x}\right)$ as an auxiliary controller. However, for MPC, such candidate control is not necessary, and a larger feasible region can be pursued\cite{Mahmood2008}, i.e., $X_\phi$ can be extended to the region that negative definiteness of the Lyapunov function derivative can be achieved. For stochastic system, we define
\begin{eqnarray*}
X_{L} = \left\{ {\left. \mathbf{x} \right|L_{f}W_{c}\left( \mathbf{x} \right) + {\sum\limits_{i = 1}^{n}{L_{G}W_{c}\left( \mathbf{x} \right)u^{i}}} + \frac{1}{2} \mathrm{tr}\left( {\sigma^{T}\left( \mathbf{x} \right)\frac{\partial^{2}W_{c}}{\partial\mathbf{x}^{2}}\sigma\left( \mathbf{x} \right)} \right) < 0} \right\}
\end{eqnarray*}
where $L_{G}W_{c}\left( \mathbf{x} \right)u^{i} = L_{g}W_{c}\left( \mathbf{x} \right)u_{maxi}$ if $L_{g}W_{c}\left( \mathbf{x} \right) \leq 0$ and $L_{G}W_{c}\left( \mathbf{x} \right)u^{i} = L_{g}W_{c}\left( \mathbf{x} \right)u_{mini}$ if $L_{g}W_{c}\left( \mathbf{x} \right) > 0$. 
$X_{L}$ only combines constraints $\mathbf{u}\left( i \middle| k \right) \in U$ and $\mathcal{L}W_{c} < 0$ at first step in predictive horizon and therefore is an external approximation of feasibility region. If weighting matrix $R_2$ is large enough, solution of proposed CLBF-based Stochastic MPC is equivalent to original optimization ($\delta(0|k)=0$) as long as feasible solution exists and optimization is always feasible $\forall\mathbf{x} \in X_{L}\backslash D$. Note that the invariance of $X\backslash D_{relaxed}$ in Proposition 1 is conditional on the fact that control action $\mathbf{u}_{t}$ satisfies $\mathcal{L}W_{c}\left( \mathbf{x}_{t} \right) < 0$ all the time. However, unboundedness of disturbances and sample-and-hold implementation of MPC may invalid such condition with a low probability\cite{Mahmood2012,Wu2018,Homer2017}.
Therefore we refer to
stability analysis for stochastic nonlinear sampled-data systems\cite{gao2016estimation,luo2019event} to analyse performance during sampling intervals. We give the following lemma and proposition.

\begin{lemma}\label{lem1}\cite{luo2019event}
Consider sampled-data system
\begin{gather}\label{eq31}
d\mathbf{x}_{t} = \left( {f\left( \mathbf{x}_{t} \right) + g\left( \mathbf{x}_{t} \right)k\left( {\mathbf{x}_{t} + \mathbf{e}_{t}} \right)} \right)dt + \sigma\left( \mathbf{x}_{t} \right)dW
\\
\mathbf{e}_{t} = \mathbf{x}_{t_{k}} - \mathbf{x}_{t},t \in \left\lbrack {t_{k},t_{k + 1}} \right)
\end{gather}
where the control input with
sample-and-hold state measurements has the form $u(t)=k\left( \mathbf{x}_{t_k} \right), t \in \left\lbrack {t_{k},t_{k + 1}} \right). f\left( \mathbf{x} \right),\sigma\left( \mathbf{x} \right),k\left( \mathbf{x} \right)$ are locally Lipschitz continuous, and there exist positive scalars $L_1,L_2,L_3,L_4$ such that $\forall\mathbf{x} \in X, \|f\left( \mathbf{x} \right)\| \leq L_1\|\mathbf{x}\|, \|\sigma\left( \mathbf{x} \right)\| \leq L_2\|\mathbf{x}\|, \|k\left( \mathbf{x} \right)\| \leq L_3\|\mathbf{x}\|, \|g\left( \mathbf{x} \right)\| \leq L_4$.
Suppose there exists a positive definite function $
V\left( {\mathbf{x},t} \right) \in C^{2,1}$, such that there are positive constants $c_1^{'},c_2^{'},c_3^{'},c_4^{'}, \forall{\left({t,\mathbf{x},\mathbf{e}}\right)} \in \left\lbrack {t_{0},\infty} \right) \times X \times X$ satisfying
\begin{gather}\label{eq33}
 c_1^{'}\left\| \mathbf{x} \right\|^{2}  \leq V\left( {\mathbf{x},t} \right) \leq c_2^{'}\left\| \mathbf{x} \right\|^{2} 
\\\label{eq34}
\mathcal{L}V\left( {\mathbf{x},t} \right) \leq c_3^{'}\left\| \mathbf{e} \right\|^{2}  - c_4^{'}\left\| \mathbf{x} \right\|^{2} 
\end{gather}
If sampling period $T \in \left\lbrack {0,T^{*}} \right)$, where $T^{*}$ is the unique solution of
\begin{equation}
c_4^{'} \varphi \left( t\right)= 2c_3^{'} \rho
\left( t\right)
\end{equation}
and $\varphi \left( t\right) = e^{-\alpha t}+ \frac{L_3^2}{\alpha}\left( e^{-\alpha t} -1\right),\rho
\left( t\right)=4t\left( 2tL_1^2+L_2^2+tL_3^2L_4^2\right)e^{4t\left( 2tL_1^2+L_2^2\right)},\alpha=2L_1+L_2^2+L_4^2$, the closed-loop system is mean-square exponentially stable.
\end{lemma}

\begin{proposition}\label{pro5}
For sampled-data system (\ref{eq31}), under the proposed CLBF based stochastic MPC, if auxiliary Lyapunov controller satisfying
\begin{equation}\label{eq36}
\mathcal{L}V_{c}\left( {\mathbf{x}_{t},\mathbf{\phi}\left( \mathbf{x}_{t} \right)} \right) \leq - \mu_{1}\left\| \mathbf{x}_{t} \right\|^{2} - \mu_{2}\left\| \frac{\partial V_{c}}{\partial\mathbf{x}_{t}} \right\|^{2}
\end{equation}
\begin{equation}\label{eq37}
\left\| {L_{g}V_{c}\left( \mathbf{x}_{t} \right) - L_{g}V_{c}\left( \mathbf{x}_{t_{k}} \right)} \right\| \leq \mu_{3}\left\| \mathbf{x}_{t} \right\|^{2} + \mu_{4}\left\| \mathbf{e}_{t} \right\|^{2}
\end{equation}
and if $\mu_{1} - 2u^{max}\mu_{3}>0$, where $\left\| \mathbf{u} \right\| \leq u^{max}$, the closed-loop system is
mean-square exponentially stable on condition that sampling period $T \in \left\lbrack {0,T^{*}} \right)$ in Lemma \ref{lem1}.
\end{proposition}

\begin{proof}
For $0\notin X_{i}$, let $
c_{5i} = \min\limits_{\mathbf{x} \in X_{i}}\left\| \mathbf{x} \right\|^{2}>0$, we have
\begin{align*}
c_{1}\left\| \mathbf{x} \right\|^{2} & \leq V_{c}\left( \mathbf{x} \right) \leq c_{2}\left\| \mathbf{x} \right\|^{2} + {\sum\limits_{i = 1}^{n_{B}}{\lambda_{i}d_{Xi}\left( {B_{imax} - B_{imin}} \right)}}  = c_{2}\left\| \mathbf{x} \right\|^{2} + {\sum\limits_{i = 1}^{n_{B}}{\lambda_{i}d_{Xi}\frac{B_{imax} - B_{imin}}{c_{5i}}\begin{matrix}
{\min} \\
{\mathbf{x} \in X_{i}} \\
\end{matrix}\left\| \mathbf{x} \right\|^{2}}}  \\
& \leq c_{2}\left\| \mathbf{x} \right\|^{2} + {\sum\limits_{i = 1}^{n_{B}}{\lambda_{i}d_{Xi}\frac{B_{imax} - B_{imin}}{c_{5i}}\left\| \mathbf{x} \right\|^{2}}}  \leq \left\lbrack {c_{2} + {\max\left( {\lambda_{i}\frac{B_{imax} - B_{imin}}{c_{5i}}} \right)}} \right\rbrack\left\| \mathbf{x} \right\|^{2}
\end{align*}
Let $c_1^{'}=c_1,c_2^{'}=c_{2} + {\max\left( {\lambda_{i}\frac{B_{imax} - B_{imin}}{c_{5i}}} \right)}$, we obtain (\ref{eq33}).
According to (\ref{eq36}),
\begin{align*}
\mathcal{L}V_{c}\left( {\mathbf{x}_{t},\mathbf{\phi}\left( {\mathbf{x}_{t} + \mathbf{e}_{t}} \right)} \right) 
& \leq - \mu_{1}\left\| \mathbf{x}_{t} \right\|^{2} - \mu_{2}\left\| \frac{\partial V_{c}}{\partial\mathbf{x}_{t}} \right\|^{2} + \frac{\partial V_{c}}{\partial\mathbf{x}_{t}}\mathbf{g}\left( \mathbf{x}_{t} \right)\left( {\mathbf{\phi}\left( {\mathbf{x}_{t} + \mathbf{e}_{t}} \right) - \mathbf{\phi}\left( \mathbf{x}_{t} \right)} \right) \\
& \leq - \mu_{1}\left\| \mathbf{x}_{t} \right\|^{2} + \frac{1}{4\mu_{2}}\left\| {\mathbf{g}\left( \mathbf{x}_{t} \right)\left( {\mathbf{\phi}\left( {\mathbf{x}_{t} + \mathbf{e}_{t}} \right) - \mathbf{\phi}\left( \mathbf{x}_{t} \right)} \right)} \right\|^{2} \leq - \mu_{1}\left\| \mathbf{x}_{t} \right\|^{2} + \frac{L_{3}^{2}L_{4}^{2}}{4\mu_{2}}\left\| \mathbf{e}_{t} \right\|^{2}
\end{align*}
Moreover,
\begin{gather*}
\mathcal{L}V_{c}\left( {\mathbf{x}_{t},\mathbf{u}_{t_{k}}} \right) - \mathcal{L}V_{c}\left( {\mathbf{x}_{t},\mathbf{\phi}\left( \mathbf{x}_{t_{k}} \right)} \right) = L_{g}V_{c}\left( \mathbf{x}_{t} \right)\left( {\mathbf{u}_{t_{k}} - \mathbf{\phi}\left( \mathbf{x}_{t_{k}} \right)} \right)
\\
\mathcal{L}V_{c}\left( {\mathbf{x}_{t_{k}},\mathbf{u}_{t_{k}}} \right) - \mathcal{L}V_{c}\left( {\mathbf{x}_{t_{k}},\mathbf{\phi}\left( \mathbf{x}_{t_{k}} \right)} \right) = L_{g}V_{c}\left( \mathbf{x}_{t_{k}} \right)\left( {\mathbf{u}_{t_{k}} - \mathbf{\phi}\left( \mathbf{x}_{t_{k}} \right)} \right) \leq 0
\end{gather*}
Substituting (\ref{eq37}) in it, we have
\begin{align*}
\mathcal{L}V_{c}\left( {\mathbf{x}_{t},\mathbf{u}_{t_{k}}} \right) 
& = \left\lbrack {\mathcal{L}V_{c}\left( {\mathbf{x}_{t},\mathbf{u}_{t_{k}}} \right) - \mathcal{L}V_{c}\left( {\mathbf{x}_{t_{k}},\mathbf{u}_{t_{k}}} \right)} \right\rbrack - \left\lbrack {\mathcal{L}V_{c}\left( {\mathbf{x}_{t},\mathbf{\phi}\left( \mathbf{x}_{t_{k}} \right)} \right) - \mathcal{L}V_{c}\left( {\mathbf{x}_{t_{k}},\mathbf{\phi}\left( \mathbf{x}_{t_{k}} \right)} \right)} \right\rbrack \\
& + \left\lbrack {\mathcal{L}V_{c}\left( {\mathbf{x}_{t_{k}},\mathbf{u}_{t_{k}}} \right) - \mathcal{L}V_{c}\left( {\mathbf{x}_{t_{k}},\mathbf{\phi}\left( \mathbf{x}_{t_{k}} \right)} \right)} \right\rbrack + \mathcal{L}V_{c}\left( {\mathbf{x}_{t},\mathbf{\phi}\left( \mathbf{x}_{t_{k}} \right)} \right)
\\
&  \leq \left\lbrack {\mathcal{L}V_{c}\left( {\mathbf{x}_{t},\mathbf{u}_{t_{k}}} \right) - \mathcal{L}V_{c}\left( {\mathbf{x}_{t_{k}},\mathbf{u}_{t_{k}}} \right)} \right\rbrack - \left\lbrack {\mathcal{L}V_{c}\left( {\mathbf{x}_{t},\mathbf{\phi}\left( \mathbf{x}_{t_{k}} \right)} \right) - \mathcal{L}V_{c}\left( {\mathbf{x}_{t_{k}},\mathbf{\phi}\left( \mathbf{x}_{t_{k}} \right)} \right)} \right\rbrack - \mu_{1}\left\| \mathbf{x}_{t} \right\|^{2} + \frac{L_{3}^{2}L_{4}^{2}}{4\mu_{2}}\left\| \mathbf{e}_{t} \right\|^{2} \\
& = \left\lbrack {L_{g}V_{c}\left( \mathbf{x}_{t} \right) - L_{g}V_{c}\left( \mathbf{x}_{t_{k}} \right)} \right\rbrack\left\lbrack {\mathbf{u}_{t_{k}} - \mathbf{\phi}\left( \mathbf{x}_{t_{k}} \right)} \right\rbrack - \mu_{1}\left\| \mathbf{x}_{t} \right\|^{2} + \frac{L_{3}^{2}L_{4}^{2}}{4\mu_{2}}\left\| \mathbf{e}_{t} \right\|^{2} \\
& \leq 2u^{max}\left\| {L_{g}V_{c}\left( \mathbf{x}_{t} \right) - L_{g}V_{c}\left( \mathbf{x}_{t_{k}} \right)} \right\| - \mu_{1}\left\| \mathbf{x}_{t} \right\|^{2} + \frac{L_{3}^{2}L_{4}^{2}}{4\mu_{2}}\left\| \mathbf{e}_{t} \right\|^{2} \\
& \leq - \left( {\mu_{1} - 2u^{max}\mu_{3}} \right)\left\| \mathbf{x}_{t} \right\|^{2} + \left( {\frac{L_{3}^{2}L_{4}^{2}}{4\mu_{2}} + 2u^{max}\mu_{4}} \right)\left\| \mathbf{e}_{t} \right\|^{2}
\end{align*}
and obtain (\ref{eq34}). By Lemma \ref{lem1}, the closed-loop system is mean-square exponentially stable.
\end{proof}
Note that in the proof we assume $\delta\left( 0 \middle| k \right) = 0$ (i.e. $\forall\mathbf{x} \in X_{\phi}\backslash D$), if $\delta\left( 0 \middle| k \right) > 0$ (i.e. $\forall\mathbf{x} \in X_{L}\backslash X_{\phi}$) to recover feasibility, the conclusion can be relaxed to ultimately bounded in the mean square.
To enlarge maximum allowable sampling period  and enhance control performance, event-triggering mechanisms can be integrated into CLBF based stochastic MPC, as illustrated by Figure \ref{fig6}. When the event is triggered at $t_k$, the control input $\mathbf{u}_{t_k}$ with sample-and-hold state measurements $\mathbf{x}_{t_k}$ is obtained through the proposed stochastic MPC and kept until the next trigger. We summarize the algorithm in Algorithm \ref{tab0}.

\begin{figure}[htbp]
\centerline{\includegraphics[width=200pt,height=11pc]{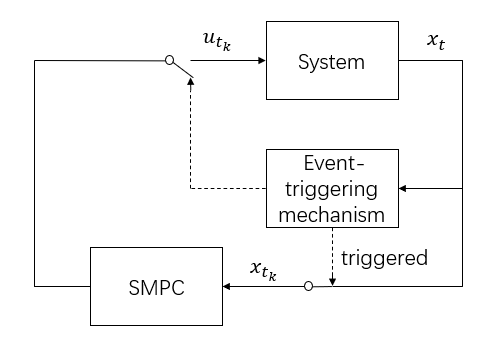}}
\caption{Control structure of sampled-data SMPC with  event-triggering mechanisms\label{fig6}}
\end{figure}

\begin{algorithm}[htbp]
\caption{CLBF based stochastic MPC algorithm}\label{tab0}
1. Measure the initial states $\mathbf{x}_{t_0}$.\\
2. Sample the current states $\mathbf{x}\left( k \right)$ at $t=t_k$ and initialize $\mathbf{x}\left( 0|k \right)=\mathbf{x}\left( k \right)$. \\
3. Discretize nominal dynamics and compute predicted states $\mathbf{x}\left( i \middle| k \right)$ for $i=1,2,\cdots,N$.\\
4. Solve the stochastic model predictive control problem and obtain predicted input $\mathbf{u}\left( i \middle| k \right)$ for $i=0,1,\cdots,N-1$.\\
5. while measuring the current states $\mathbf{x}_t$, implement and hold the control input $\mathbf{u}_t=\mathbf{u}\left( 0 \middle| k \right)$ until the event is triggered.\\ 
6. If the event is triggered at $t=t_{k+1}$, let $k \gets k+1$, turn to 2.\\
\end{algorithm}

\begin{proposition}\label{pro6}
On the basis of MPC controller in Proposition \ref{pro5}, the closed-loop system is mean-square exponentially stable if triggered time instant $t_{k + 1}$ is determined by the following triggering condition
\begin{equation}\label{eq38}
t_{k + 1} = {\inf\left\{ t \geq t_{k} + \tau , t \leq t_{k} + \tau_{max} \middle| \mathcal{L}V_{c}\left( {\mathbf{x}_{t},\mathbf{u}_{t_{k}}} \right) \leq \mathcal{L}V_{c}\left( {\mathbf{x}_{t},\mathbf{\phi}\left( \mathbf{x}_{t} \right)} \right) \right\}}
\end{equation}
where $\tau \leq T^{*}$ is the minimum interexecution time to avoid Zeno phenomena, i.e., infinite triggers in finite
time.
\end{proposition}

\begin{proof}
As discussed in Remark 2\cite{luo2019event}, if $\tau \leq T^{*}$, for $t \in \left(t_{k},t_{k} + \tau\right)$,
\begin{equation*}
E\mathcal{L}V_{c}\left( {\mathbf{x}_{t},\mathbf{u}_{t_{k}}} \right) \leq  - \frac{1}{2}c_4^{'} E\left\| \mathbf{x}_{t} \right\|^{2}
\end{equation*}
Moreover, by triggering condition (\ref{eq38}), for $t \in \left[t_{k} + \tau,t_{k+1}\right]$,  
\begin{equation*}
\mathcal{L}V_{c}\left( {\mathbf{x}_{t},\mathbf{u}_{t_{k}}} \right) \leq \mathcal{L}V_{c}\left( {\mathbf{x}_{t},\mathbf{\phi}\left( \mathbf{x}_{t} \right)} \right) \leq - \rho c_1\left\| \mathbf{x}_{t} \right\|^{2}
\end{equation*}
Hence, for $t \in \left(t_{k},t_{k+1}\right]$,
\begin{equation*}
E\mathcal{L}V_{c}\left( {\mathbf{x}_{t},\mathbf{u}_{t_{k}}} \right) \leq  -c_0 E\left\| \mathbf{x}_{t} \right\|^{2}
\end{equation*}
where $c_0=\min\left(\frac{1}{2}c_4^{'},\rho c_1\right)$, and the closed-loop system is mean-square exponentially stable according to Lemma 2\cite{gao2016estimation}.
\end{proof}

\begin{remark}\label{rem1}
Selection of parameters mainly influences feasibility region, stability, and convergence rate, therefore there is a trade-off between safety and stability. We should first decide $B_i\left( \mathbf{x} \right)$ to maximize feasibility region, and then tune $V\left( \mathbf{x} \right)$ and $\rho$ to decide convergence rate and to obtain a proper $T^{*}$. We will give an example to show design procedure in Section \ref{sec5}. In Figure \ref{fig3}, we first consider feasibility region only and obtain a proper $B_i\left( \mathbf{x} \right)$, which makes $X_L$ contain $X\backslash D$. Then in the right graph, we change parameters in $V\left( \mathbf{x} \right)$  to obtain a proper $T^{*}$ that reaches stability condition for sampled-data systems.
\end{remark}

\section{Application to Wheeled Mobile Robots}\label{sec5}
We consider a wheeled mobile robot example. Our goal is to drive the robot to the origin $\left( {x,y} \right) = \left( {0,0} \right)$ while avoiding the unsafe sets $
D_{i} = \left\{ \mathbf{x} \middle| F_{Bi}\left( \mathbf{x} \right) = \left( {x - x_{i\_ obs}~} \right)^{2} + \left( {y - y_{i\_ obs}} \right)^{2} < l_{Di} \right\}$. The corresponding $X_i$ are chosen as $X_{i} = \left\{ \mathbf{x} \middle| F_{Bi}\left( \mathbf{x} \right) < l_{Xi} \right\}$. Kinematic model of the wheeled mobile robot is described by
\begin{eqnarray*}
d\mathbf{x} = g\left( \mathbf{x} \right)\mathbf{u}dt + \sigma\left( \mathbf{x} \right)dW
\end{eqnarray*}
with
\begin{eqnarray*}
g\left( \mathbf{x} \right) = \begin{bmatrix}
{\cos\theta} & 0 \\
{\sin\theta} & 0 \\
0 & 1 \\
\end{bmatrix},
\sigma\left( \mathbf{x} \right) = diag\left( {\left( {1 - d_{Xg} + d_{Xg}\frac{\sqrt{x^{2} + y^{2}}}{r_{g}}} \right)\sigma_{1},\left( {1 - d_{Xg} + d_{Xg}\frac{\sqrt{x^{2} + y^{2}}}{r_{g}}} \right)\sigma_{2},\sigma_{3}} \right)
\end{eqnarray*}
\begin{eqnarray*}
d_{Xg} = \left\{ \begin{matrix}
{1,\forall x \in X_{g}} \\
{0,\forall x \in X\backslash X_{g}} \\
\end{matrix} \right.,X_{g} = \left\{ \mathbf{x} \middle| {x^{2} + y^{2} \leq r_{g}^{2}} \right\}
\end{eqnarray*}
where $\mathbf{x} = \begin{bmatrix}
x & y & \theta \\
\end{bmatrix}^{T}$ denotes the position and orientation of the mobile robot; $\mathbf{u} = \begin{bmatrix}
v & \omega \\
\end{bmatrix}^{T}$ denotes the translational velocity and angular velocity; $W$ is the vector of independent Brownian motions.

Constraints on control inputs are given by $
\begin{bmatrix}
{- 10} \\
{- \frac{\pi}{2}} \\
\end{bmatrix} \leq \mathbf{u} \leq \begin{bmatrix}
10 \\
\frac{\pi}{2} \\
\end{bmatrix}
$. Let $\bar{\mathbf{x}} = \begin{bmatrix}
x & y & \theta & v \\
\end{bmatrix}^{T}$, $
\mathbf{z} = \mathbf{\phi}\left( \bar{\mathbf{x}} \right) = \begin{bmatrix}
x & y & {v\cos\theta} & {v\sin\theta} \\
\end{bmatrix}^{T}$. The semi-linear system in (\ref{eq16}) is
\begin{eqnarray*}
d\mathbf{z} = \begin{bmatrix}
0 & 0 & 1 & 0 \\
0 & 0 & 0 & 1 \\
0 & 0 & {- \frac{1}{2}\sigma_{3}^{2}} & 0 \\
0 & 0 & 0 & {- \frac{1}{2}\sigma_{3}^{2}} \\
\end{bmatrix}\mathbf{z}dt + \begin{bmatrix}
0 & 0 \\
0 & 0 \\
1 & 0 \\
0 & 1 \\
\end{bmatrix}\mathbf{u}^{\prime}dt + \mathbf{\sigma}^{\prime}\left( \mathbf{z} \right)dW
\end{eqnarray*}
with
\begin{eqnarray*}
\mathbf{u}^{\prime} = \begin{bmatrix}
{- v\sin\theta} & {\cos\theta} \\
{v\cos\theta} & {\sin\theta} \\
\end{bmatrix}\begin{bmatrix}
\omega \\
\dot{v} \\
\end{bmatrix},
\mathbf{\sigma}^{\prime}\left( \mathbf{z} \right) = \begin{bmatrix}
{\left( {1 - d_{Xg} + d_{Xg}\frac{\sqrt{z_{1}^{2} + z_{2}^{2}}}{r_{g}}} \right)\sigma_{1}} & 0 & 0 \\
0 & {\left( {1 - d_{Xg} + d_{Xg}\frac{\sqrt{z_{1}^{2} + z_{2}^{2}}}{r_{g}}} \right)\sigma_{2}} & 0 \\
0 & 0 & {- \sigma_{3}z_{4}} \\
0 & 0 & {\sigma_{3}z_{3}} \\
\end{bmatrix}
\end{eqnarray*}
We choose a quadratic function $\bar{V}\left( \mathbf{z} \right) = \mathbf{z}^{T}\mathbf{P}\mathbf{z}$ where
\begin{eqnarray*}
\mathbf{P} = \begin{bmatrix}
p_{1} & 0 & p_{2} & 0 \\
0 & p_{1} & 0 & p_{2} \\
p_{2} & 0 & p_{3} & 0 \\
0 & p_{2} & 0 & p_{3} \\
\end{bmatrix}
\end{eqnarray*}
with $p_1>0,p_3>0,p_1p_3-p_2^2>0$. The infinitesimal generator of $\bar{V}\left( \mathbf{z} \right)$ is
\begin{align*}
\mathcal{L}\bar{V}\left( \mathbf{z} \right) & = 2\left( {p_{3}z_{3} + p_{2}z_{1}} \right)u_{1}^{\prime} + 2\left( {p_{3}z_{4} + p_{2}z_{2}} \right)u_{2}^{\prime} + \left( {2p_{1} - p_{2}\sigma_{3}^{2}} \right)\left( {z_{1}z_{3} + z_{2}z_{4}} \right) + 2p_{2}\left( {z_{3}^{2} + z_{4}^{2}} \right) \\
& + p_{1}\left( {\sigma_{1}^{2} + \sigma_{2}^{2}} \right)\left( {1 - d_{Xg} + d_{Xg}\frac{z_{1}^{2} + z_{2}^{2}}{r_{g}^{2}}} \right)
\end{align*}
When $L_{B}\bar{V}\left( \mathbf{z} \right) = 2\begin{bmatrix}
{p_{3}z_{3} + p_{2}z_{1}} & {p_{3}z_{4} + p_{2}z_{2}} \\
\end{bmatrix} = 0$, 
\begin{align*}
\mathcal{L}\bar{V}\left( \mathbf{z} \right) & \leq 2\left( {p_{3}z_{3} + p_{2}z_{1}} \right)u_{1}^{\prime} + 2\left( {p_{3}z_{4} + p_{2}z_{2}} \right)u_{2}^{\prime} + \left( {2p_{1} - p_{2}\sigma_{3}^{2}} \right)\left( {z_{1}z_{3} + z_{2}z_{4}} \right) + 2p_{2}\left( {z_{3}^{2} + z_{4}^{2}} \right) + p_{1}\left( {\sigma_{1}^{2} + \sigma_{2}^{2}} \right)\frac{z_{1}^{2} + z_{2}^{2}}{r_{g}^{2}} \\
& = \left( {2p_{1} - p_{2}\sigma_{3}^{2}} \right)\left( {z_{1}z_{3} + z_{2}z_{4}} \right) + 2p_{2}\left( {z_{3}^{2} + z_{4}^{2}} \right) + p_{1}\left( {\sigma_{1}^{2} + \sigma_{2}^{2}} \right)\frac{z_{1}^{2} + z_{2}^{2}}{r_{g}^{2}} \\ & = \left\lbrack {- \frac{p_{2}}{2p_{3}}\left( {2p_{1} - p_{2}\sigma_{3}^{2}} \right) + \frac{p_{1}\left( {\sigma_{1}^{2} + \sigma_{2}^{2}} \right)}{r_{g}^{2}}} \right\rbrack\left( {z_{1}^{2} + z_{2}^{2}} \right) + \left( {- \frac{p_{1}p_{3}}{p_{2}} + 2p_{2} + \frac{1}{2}p_{3}\sigma_{3}^{2}} \right)\left( {z_{3}^{2} + z_{4}^{2}} \right)
\end{align*}
Therefore, if the following inequalities hold, $\mathcal{L}\bar{V}\left( \mathbf{z} \right) < 0$ when $L_{B}\bar{V}\left( \mathbf{z} \right) = 0$, and then $\bar{V}\left( \mathbf{z} \right)$ becomes an unconstrained control Lyapunov function for the linearized system.
\begin{eqnarray*}
p_{1},p_{2},p_{3} > 0,p_{1}p_{3} - p_{2}^{2} > 0
,
\frac{p_{2}}{p_{3}} - \frac{\sigma_{1}^{2} + \sigma_{2}^{2}}{r_{g}^{2}} > 0
,
p_{1} > {\max\left( {\frac{p_{2}^{2}r_{g}^{2}\sigma_{3}^{2}}{2p_{2}r_{g}^{2} - 2p_{3}\left( \sigma_{1}^{2} + \sigma_{2}^{2} \right)},\frac{2p_{2}^{2}}{p_{3}} + \frac{1}{2}p_{2}\sigma_{3}^{2}} \right)}
\end{eqnarray*}
Moreover, $L_{B}\bar{V}\left( \mathbf{z} \right) = 0$ implies $
p_{2}\left( {x\cos\theta + y\sin\theta} \right) + p_{3}v = 0$ and $y\cos\theta - x\sin\theta = 0$, and hence
\begin{align*}
\mathcal{L}\bar{V}\left( \bar{\mathbf{x}} \right) & = 2p_{2}\left( {y\cos\theta - x\sin\theta} \right)v\omega + 2\left\lbrack {p_{2}\left( {x\cos\theta + y\sin\theta} \right)} \right\rbrack\dot{v} + 2\left\lbrack {p_{1}\left( {x\cos\theta + y\sin\theta} \right) + p_{2}v} \right\rbrack v \\
& + p_{1}\left( {\sigma_{1}^{2} + \sigma_{2}^{2}} \right)\left( {1 - d_{Xg} + d_{Xg}\frac{x^{2} + y^{2}}{r_{g}^{2}}} \right) - p_{2}\left( {x\cos\theta + y\sin\theta} \right)v\sigma_{3}^{2} \\
& = \left( {2p_{1} - p_{2}\sigma_{3}^{2}} \right)\left( {x\cos\theta + y\sin\theta} \right)v + 2p_{2}v^{2} + p_{1}\left( {\sigma_{1}^{2} + \sigma_{2}^{2}} \right)\left( {1 - d_{Xg} + d_{Xg}\frac{x^{2} + y^{2}}{r_{g}^{2}}} \right) \\ 
& \leq \left( {2p_{1} - p_{2}\sigma_{3}^{2}} \right)\left( {x\cos\theta + y\sin\theta} \right)v + 2p_{2}v^{2} + p_{1}\left( {\sigma_{1}^{2} + \sigma_{2}^{2}} \right)\frac{x^{2} + y^{2}}{r_{g}^{2}} \\
& = \left\lbrack {- \frac{p_{2}}{2p_{3}}\left( {2p_{1} - p_{2}\sigma_{3}^{2}} \right) + \frac{p_{1}\left( {\sigma_{1}^{2} + \sigma_{2}^{2}} \right)}{r_{g}^{2}}} \right\rbrack\left( {x^{2} + y^{2}} \right) + \left( {- \frac{p_{1}p_{3}}{p_{2}} + 2p_{2} + \frac{1}{2}p_{3}\sigma_{3}^{2}} \right)v^{2} < 0
\end{align*}
which means $\bar{V}\left( \bar{\mathbf{x}} \right)$ is an unconstrained control Lyapunov function for the augmented system. We have
\begin{eqnarray*}
\mathcal{L}\bar{V}\left( \bar{\mathbf{x}} \right) = p_{1}\left( {\sigma_{1}^{2} + \sigma_{2}^{2}} \right)\left( {1 - d_{Xg} + d_{Xg}\frac{x^{2} + y^{2}}{r_{g}^{2}}} \right) - p_{2}\left( {x\cos\theta + y\sin\theta} \right)v\sigma_{3}^{2}
\end{eqnarray*}
when $L_{\bar{g}}\bar{V}\left( \bar{\mathbf{x}} \right) = 0$. Finally, if $L_{\mathbf{g}}V\left( \mathbf{x} \right) = 0$,
\begin{align*}
\mathcal{L}V\left( \mathbf{x} \right) & = \left\lbrack {\left( {p_{1} - \frac{p_{2}^{2}}{p_{3}}{\cos}^{2}\theta} \right)\sigma_{1}^{2} + \left( {p_{1} - \frac{p_{2}^{2}}{p_{3}}{\sin}^{2}\theta} \right)\sigma_{2}^{2}} \right\rbrack\left( {1 - d_{Xg} + d_{Xg}\frac{x^{2} + y^{2}}{r_{g}^{2}}} \right) \\
& + \frac{p_{2}^{2}}{p_{3}}\left\lbrack {\left( {x\cos\theta + y\sin\theta} \right)^{2} - \left( {y\cos\theta - x\sin\theta} \right)^{2}} \right\rbrack\sigma_{3}^{2} \\
& < p_{1}\left( {\sigma_{1}^{2} + \sigma_{2}^{2}} \right)\left( {1 - d_{Xg} + d_{Xg}\frac{x^{2} + y^{2}}{r_{g}^{2}}} \right) + \frac{p_{2}^{2}}{p_{3}}\left( {x\cos\theta + y\sin\theta} \right)^{2} = \mathcal{L}\bar{V}\left( \bar{\mathbf{x}} \right)\left. \hspace{0pt} \right|_{p = p_{0},L_{\bar{g}}\bar{V}{(\bar{\mathbf{x}})} = 0} < 0
\end{align*}
and consequently $
V\left( \mathbf{x} \right)$ is an unconstrained control Lyapunov function for the original system. We can obtain $
V\left( \mathbf{x} \right)$ according to (\ref{eq19}), namely,
\begin{eqnarray*}
V\left( \mathbf{x} \right) = p_{1}\left( x^{2} + y^{2} \right) - \frac{p_{2}^{2}}{p_{3}}\left( {x\cos\theta + y\sin\theta} \right)^{2}
\end{eqnarray*}
The proposed stochastic CLBF is
\begin{eqnarray*}
W_{c}\left( \mathbf{x} \right) = V\left( \mathbf{x} \right) + {\sum\limits_{i = 1}^{n_{B}}{\lambda_{i}B_{i}\left( \mathbf{x} \right)}} + \kappa = p_{1}\left( x^{2} + y^{2} \right) - \frac{p_{2}^{2}}{p_{3}}\left( {x\cos\theta + y\sin\theta} \right)^{2} + {\sum\limits_{i = 1}^{n_{B}}{\lambda_{i}\left\lbrack {d_{Xi}\frac{B_{imax} - B_{imin}}{1 + e^{- \frac{k_{Bi}(l_{Di} - F_{Bi})}{F_{Bi}{({l_{Xi} - F_{Bi}})}}}~} + B_{imin}} \right\rbrack}} + \kappa
\end{eqnarray*}
We then compute the feasible regions $X_\phi$ and $X_L$. Quadruples $\left( {x_{i\_ obs},y_{i\_ obs},l_{Di},l_{Xi}} \right)$ of obstacles are $\left(30,25,25,36\right)$, $\left(50,50,25,36\right)$, $\left(68,30,56.25,90.25\right)$, $\left(80,60,56.25,90.25\right)$, respectively. Other parameters in the simulations are displayed in Table 1.

\begin{center}
\begin{table}[ht]
\centering
\caption{Parameters in the simulations\label{tab1}}%
\begin{tabular*}{363pt}{c|cccc}
\hline   \multirow{2}*{\textbf{Simulation conditions}} & $\sigma_{1} = 0.3$ & $\sigma_{2} = 0.3$ & $\sigma_{3} = 0.6$ & $r_{g} = 5$  \\  
 & \multicolumn{4}{c}{$\mathbf{x}_{0} = \left( {100,80, - \frac{\pi}{2}} \right)$} \\  
\hline   \textbf{CLF} & $p_1=10$ & $p_2=1$ & $p_3=1$  & \\ 
\hline   \textbf{CBF} & \multicolumn{2}{c}{$B_{min} = \lbrack - 10, - 10, - 10, - 10\rbrack$} & \multicolumn{2}{c}{$B_{max} = \lbrack 15,15,15,15\rbrack$}  \\  
\hline   \multirow{6}*{\textbf{CLBF}} & \multicolumn{2}{c}{$c_{1} = p_{1} - \frac{3p_{2}^{2}}{2p_{3}} = \frac{17}{2}$} & \multicolumn{2}{c}{$\lambda_{i} = \frac{c_{2}c_{3i} - c_{1}c_{4i}}{\eta_{i}} + K_{\lambda i}$}   \\  
 & \multicolumn{2}{c}{$c_{2} = p_{1} = 3$} & \multicolumn{2}{c}{$\eta_{i} = - B_{min\_ i}$}  \\
 & \multicolumn{4}{c}{$
c_{3i} = \begin{matrix}
{\max} \\
{x \in \partial X_{i}} \\
\end{matrix}\left( {x^{2} + y^{2}} \right) = \left( {\sqrt{x_{i\_ obs}^{2} + y_{i\_ obs}^{2}} + \sqrt{l_{Xi}}} \right)^{2}$}  \\
 & \multicolumn{4}{c}{$c_{4i} = \begin{matrix}
{\min} \\
{x \in D_{i}} \\
\end{matrix}\left( {x^{2} + y^{2}} \right) = \left( {\sqrt{x_{i\_ obs}^{2} + y_{i\_ obs}^{2}} - \sqrt{l_{Di}}} \right)^{2}$}  \\
 & \multicolumn{4}{c}{$K_{\lambda} = \lbrack 100000,80000,100000,80000\rbrack$} \\
 & \multicolumn{4}{c}{$
\kappa = \frac{1}{2}\left\lbrack {\underset{i}{\max}\left( {{\sum\limits_{j = 1,j \neq i}^{n_{B}}{\lambda_{j}\eta_{j}}} - c_{1}c_{4i}} \right) + {\sum\limits_{i = 1}^{n_{B}}{\lambda_{i}\eta_{i}}} - c_{2}\underset{i}{\max}\left( c_{3i} \right)} \right\rbrack$}   \\
\hline   \multirow{2}*{\textbf{MPC}} & \multicolumn{2}{c}{$\mathbf{Q} = diag(10,10,0)$} & \multicolumn{2}{c}{$\mathbf{R} = diag(0,0,0)$}   \\
 & \multicolumn{2}{c}{$T = 0.1$} & \multicolumn{2}{c}{$N=20$} \\
\hline
\end{tabular*}
\end{table}
\end{center}

The decay rate $k_{Bi}$ of CBF largely influences the feasible regions. Near the boundary of $X_i$, the term $k_{Bi}^2$ in $\frac{\partial^{2}B_{i}}{\partial\mathbf{x}^{2}}$ mainly dominates the magnitudes of $\frac{1}{2}\mathrm{tr}\left( {\mathbf{\sigma}^{T}\left( \mathbf{x} \right)\frac{\partial^{2}W_{c}}{\partial\mathbf{x}^{2}}\mathbf{\sigma}\left( \mathbf{x} \right)} \right)$ in $\mathcal{L}W_{c}\left( \mathbf{x} \right)$, implying that a small $k_{Bi}$ is preferred. However, on the other hand, constraints (\ref{eq29}) are not continuously implemented in the optimization, therefore gradients of $B_i$ should be uniform enough to counteract the effects of $\mathcal{L}V\left( \mathbf{x} \right)$. We show $k_{B4}$’s influence on $B_4$ in Figure \ref{fig1}. 

\begin{figure}[htbp]
\centerline{\includegraphics[width=300pt,height=16pc]{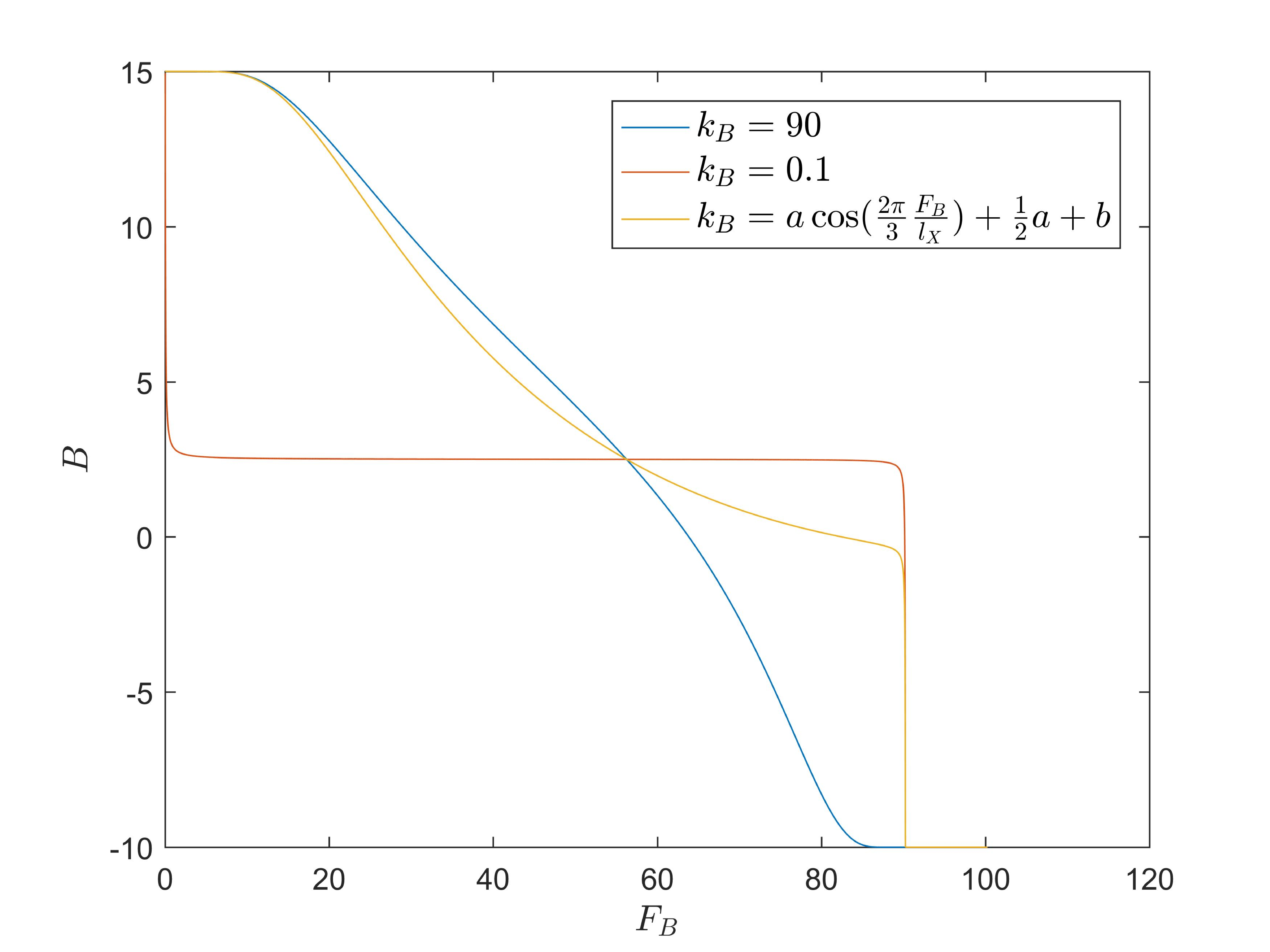}}
\caption{$k_{Bi}$’s influence on $B_i$\label{fig1}}
\end{figure}

\begin{figure}[htbp]
\centering
{\includegraphics[width=300pt,height=16pc]{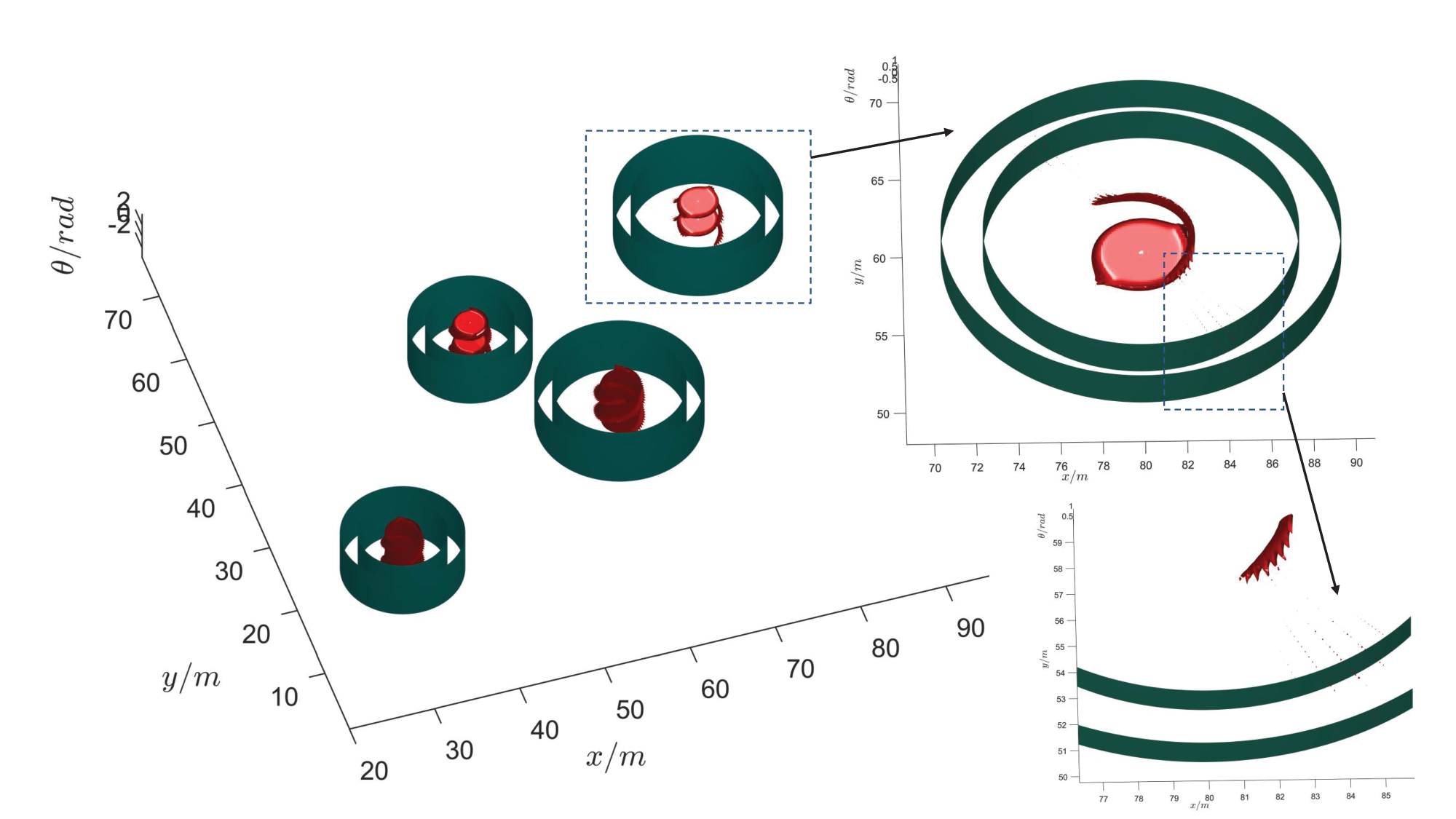}}

{\includegraphics[width=300pt,height=16pc]{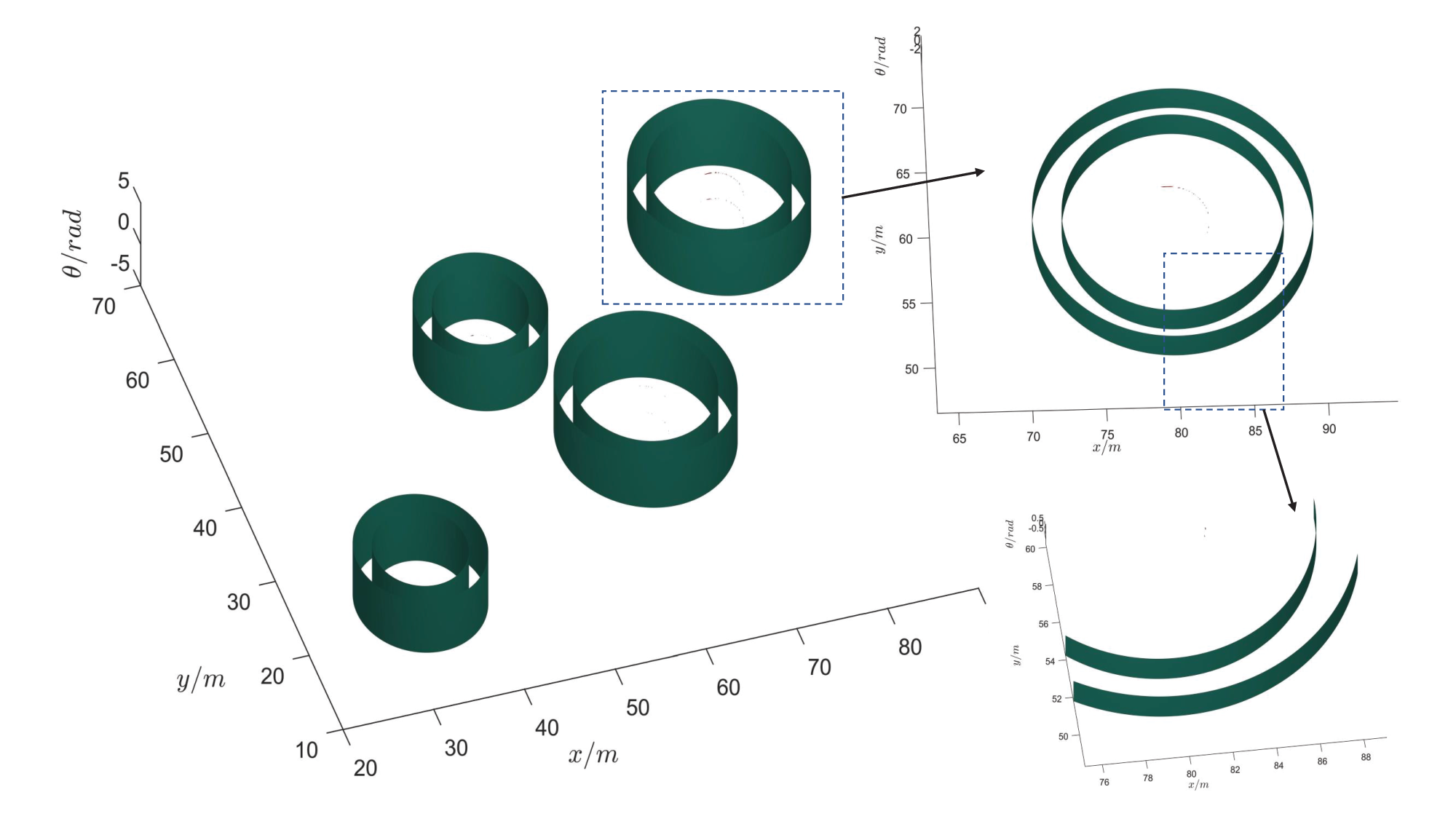} }
\caption{The feasible regions $X_\phi$ with $\rho=0.005$ (the former) and $X_L$ (the later). The green surfaces are boundaries of $X_i$ and $D_i$ while the red region is outside $X_\phi$ or $X_L$.\label{fig2}}
\end{figure}

Figure \ref{fig1} indicates small $k_{Bi}$ will cause a sharp rise after crossing the boundary of $X_i$ and remain nearly constant as $F_{Bi}\left( \mathbf{x} \right)$ approaches to zero, therefore $k_{Bi}\left( F_{Bi} \right)$ should be a monotonically decreasing function. We construct $k_{Bi} = {a_{i}{\cos\left( {\frac{2\pi}{3}\frac{F_{Bi}}{l_{Xi}}} \right)}} + \frac{1}{2}a_{i} + b_{i} \in \left\lbrack {b_{i},\frac{3}{2}a_{i} + b_{i}} \right\rbrack$ with $a_i=60,b_i=0.1$ and the feasible regions are given by Figure \ref{fig2}. It can be seen that, near unsafe regions $D_i$, there are some small regions that the CLBF-based candidate controller is not a feasible solution while MPC can still provide admissible control inputs. 

Our simulations are based on CasADi\cite{Andersson2019}. The wheeled mobile robot starts from $\left(x,y\right)=\left(100,80\right)$ and runs towards the origin $\left(x,y\right)=\left(0,0\right)$. 
We run the simulation 20 times under same settings and the results are displayed in Figures \ref{fig3}.
To show benefits of proposed methods, in the left graph, we ignore the influence of auxiliary Lyapunov controller 
$ \mathcal{L}W_{c}\left( {\mathbf{x}\left( i \middle| k \right),\mathbf{\phi}\left( \mathbf{x}\left( i \middle| k \right) \right)} \right)$ in (\ref{eq29}) and 
 event-triggering mechanisms in (\ref{eq38}). When the wheeled mobile robot enters $X_i$ (blue circles), CBFs work and impose constraints on MPC to steer it away from unsafe regions $D_i$ (red circles). States and inputs are displayed in Figures \ref{fig4}. As sample interval $T$ approaches zero, the states $x$ and $y$ are stochastically asymptotically stabilizable and converge to zero with constrained inputs. The risk of entering $D_{relaxed}$ will finally tend to zero, and so is that of unsafe regions $D_i$. 

\begin{figure}[htbp]
		\begin{minipage}{0.44\textwidth}
			\centering
			\includegraphics[width=250pt,height=20pc]{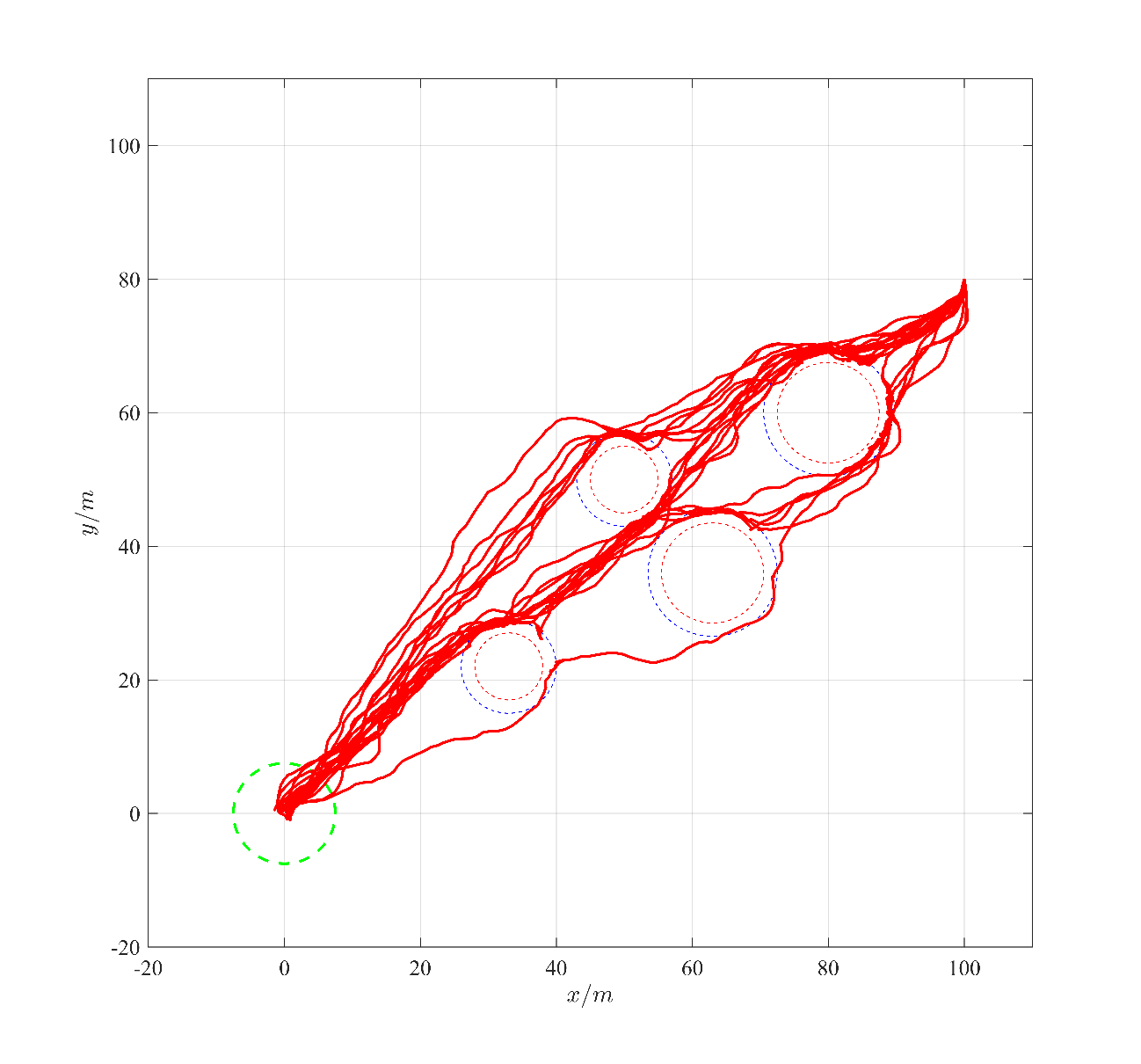}
		\end{minipage}
		\qquad
		\begin{minipage}{0.5\textwidth}
			\centering
			\includegraphics[width=250pt,height=20pc]{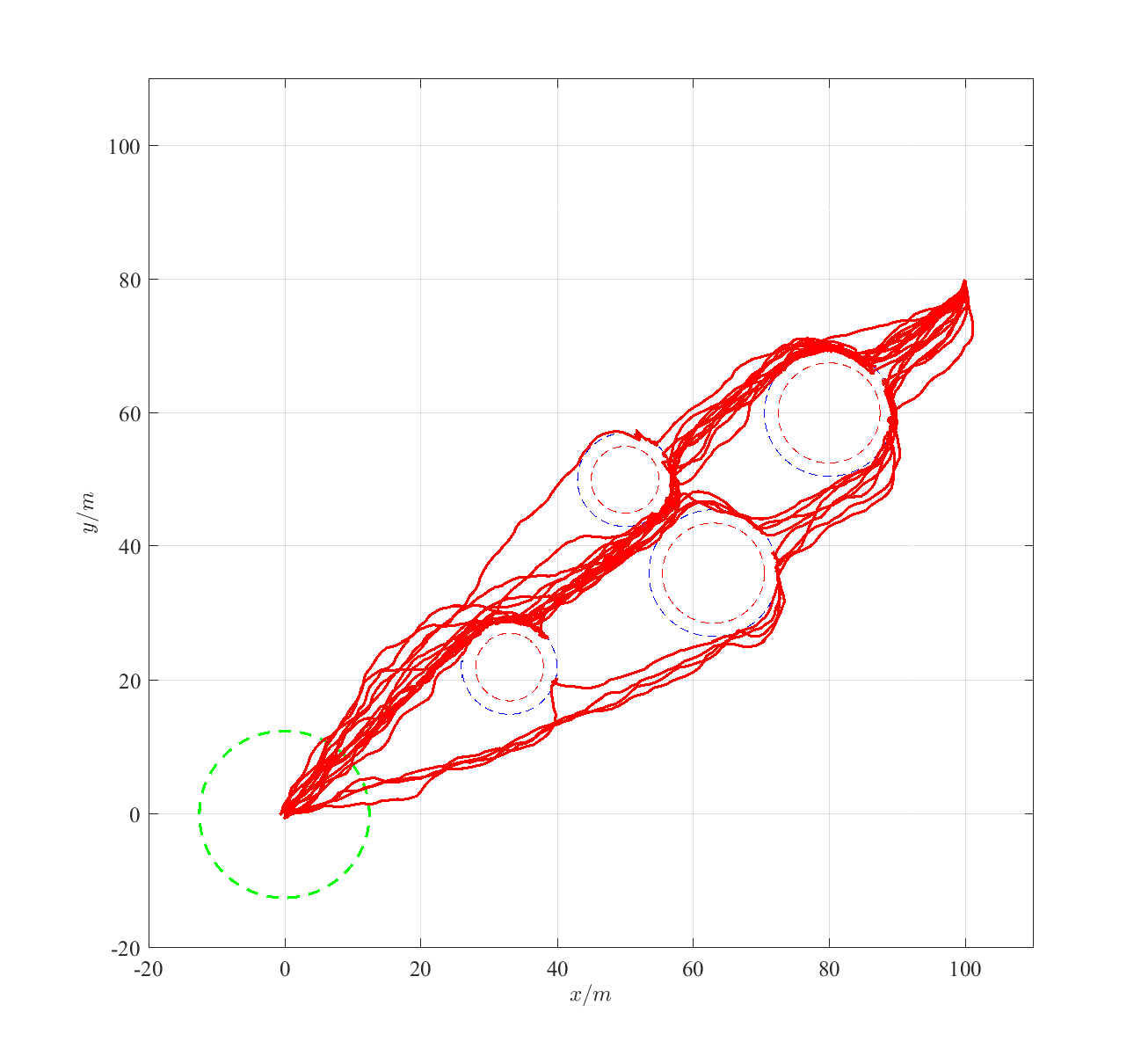}
		\end{minipage}
  \caption{20 simulations with same conditions. Red, blue and green circles denote $X_i$, $D_i$ and $X_g$ respectively. The right one considers auxiliary Lyapunov controller and event-triggering mechanisms while the left does not.}\label{fig3}
\end{figure}

\begin{figure}[htbp]
		\begin{minipage}{0.44\textwidth}
			\centering
			\includegraphics[width=230pt,height=17pc]{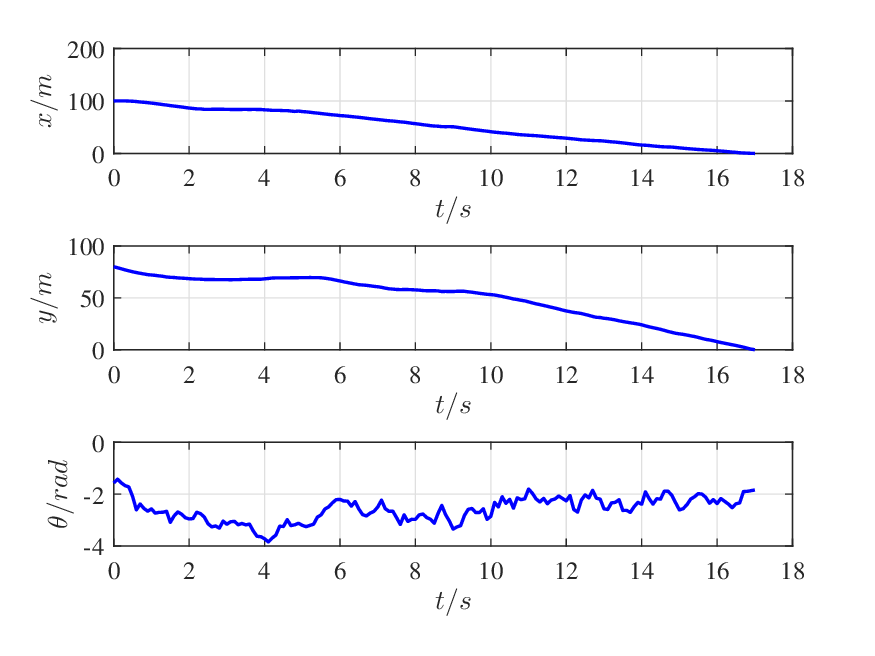}
		\end{minipage}
		\qquad
		\begin{minipage}{0.5\textwidth}
			\centering
			\includegraphics[width=230pt,height=17pc]{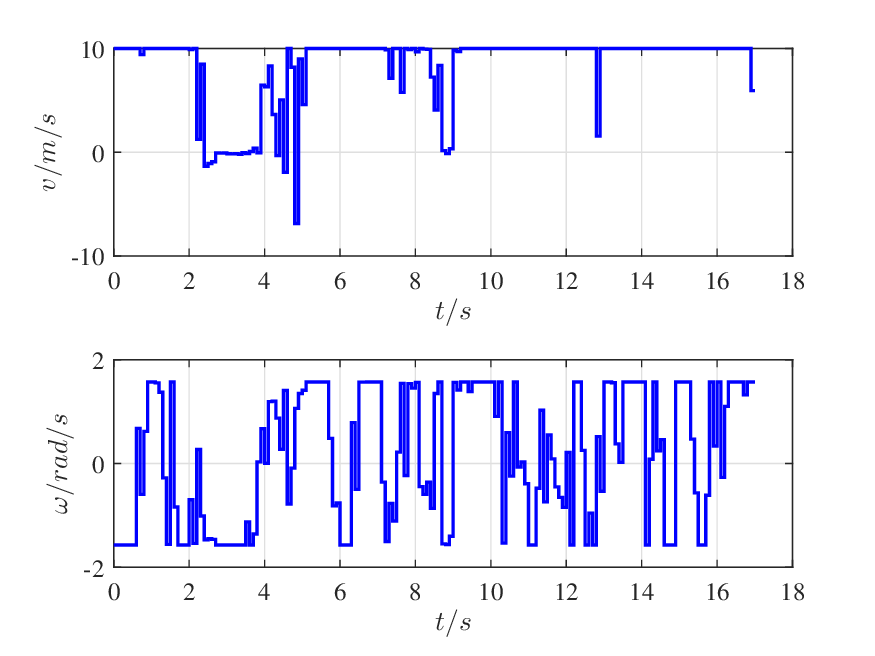}
		\end{minipage}
\caption{States and control inputs of the wheeled mobile robot}\label{fig4}
\end{figure}

\begin{figure}[htbp]
		\begin{minipage}{0.44\textwidth}
			\centering
			\includegraphics[width=230pt,height=14pc]{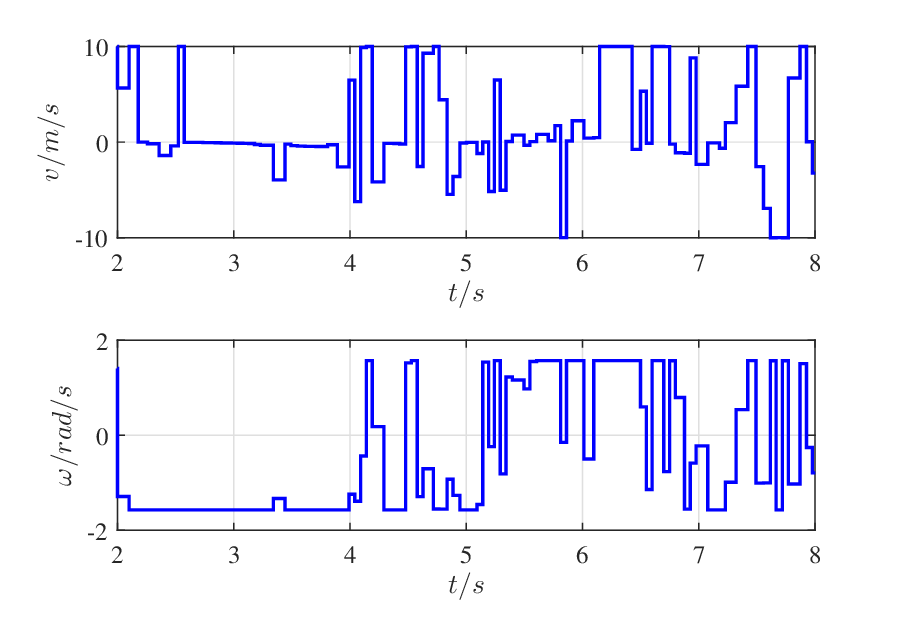}
		\end{minipage}
		\qquad
		\begin{minipage}{0.5\textwidth}
			\centering
			\includegraphics[width=230pt,height=14pc]{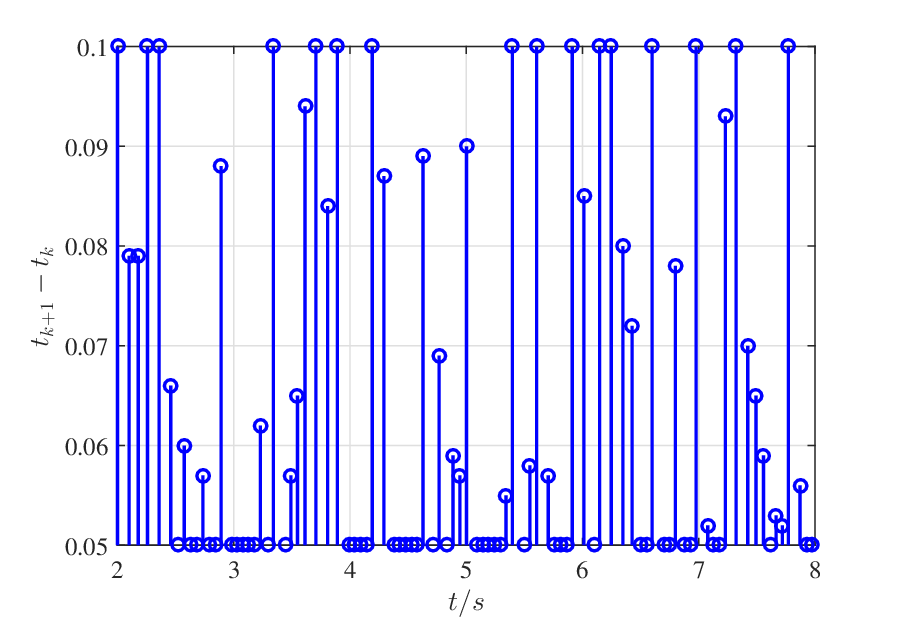}
		\end{minipage}
\caption{Control inputs and execution times under event-triggering mechanisms}\label{fig5}
\end{figure}

Auxiliary Lyapunov controller mainly influences control performance during sampling intervals. On one hand, $ \mathcal{L}W_{c}\left( {\mathbf{x}\left( i \middle| k \right),\mathbf{\phi}\left( \mathbf{x}\left( i \middle| k \right) \right)} \right)$ in (\ref{eq29}) restricts convergence rate of states, on the another hand, it provides some robustness against sampling errors $\mathbf{e}_{t}$, which may invalidate CLBF constraints, between the last sample-hold states $\mathbf{x}_{t_{k}}$ and current states $\mathbf{x}_{t}$. Similarly, the triggering condition (\ref{eq38}) recovers CLBF constraints after $t_{k} + \tau$, therefore enlarges maximum allowable sampling period and improves effects of obstacle avoidance, as what can be seen in the right graph.
To balance computation expense and theoretical stability guarantee (although the estimation in Lemma \ref{lem1} may be a little large), a proper minimum interexecution time $\tau$ is essential. 
We modify some parameters to $p_{1}=0.1,p_{2}=p_{3}=0.001,r_g=10$ and obtain $L_1=0,L_2=0.0424,L_3=1.02,L_4=1,\mu_{1} = \rho c_{1} - 4\left( {p_{1} - \frac{p_{2}^2}{p_{3}}} \right)^{2} = 0.3076,\mu_{2} = 4\left( {p_{1} - \frac{p_{2}^2}{p_{3}}} \right)^{2} = 0.0324,\mu_{3} = 0,
\mu_{4} = 2\left( {p_{1} - \frac{p_{2}^2}{p_{3}}} \right) = 0.18, T^{*}=0.0525$ according to Lemma \ref{lem1}. 
Let $\tau=0.05,\tau_{max}=0.1,T=0.05$ and simulations results are displayed in the right graph. Note that the proposed MPC is always feasible $\forall\mathbf{x} \in X_{g}$ if auxiliary Lyapunov controller is taken as a candidate solution and hence closed-loop systems is ultimately bounded in the mean square.
Control inputs and execution times under event-triggering mechanisms are displayed in Figure \ref{fig5}. 
It can be seen that, with the proposed CLBF-based stochastic MPC, the wheeled mobile robot is capable to reach the goal set while avoiding unsafe sets.

\section{Conclusion}
This paper presents a continuous-time stochastic MPC approach for nonlinear systems. We give the notion of stochastic CLBF and its construction. Moreover, a stochastic CLF design method for differentially flat systems via dynamic feedback linearization is presented. The proposed stochastic MPC can ensure stability, safety and feasibility and is validated through a collision avoidance simulation of wheeled mobile robots. Future work focuses on theoretical analysis of safety in sampling intervals.

\nocite{*}
\bibliography{CLBF-SMPC.bib}

\end{document}